\newtheorem*{theorem*}{Theorem}
\newtheorem{theorem}{Theorem}[section]
\newaliascnt{lemma}{theorem}
\newtheorem{lemma}[lemma]{Lemma}
\newaliascnt{proposition}{theorem}
\newtheorem{proposition}[proposition]{Proposition}
\newaliascnt{corollary}{theorem}
\newtheorem{corollary}[corollary]{Corollary}
\newaliascnt{conjecture}{theorem}
\newtheorem{conjecture}[conjecture]{Conjecture}
\newaliascnt{problem}{theorem}
\newtheorem{problem}[problem]{Problem}
\newaliascnt{example}{theorem}
\def\tagform@#1{\maketag@@@{\ignorespaces#1\unskip\@@italiccorr}}
\let\orgtheequation\theequation
\def\theequation{(\orgtheequation)}
\def\equationautorefname~{}
\newcommand{\R}{{\mathbb R}}
\newcommand{\Rn}{{\mathbb R}^n}
\newcommand{\tr}{\operatorname{tr}}
\newcommand{\logV}{\operatorname{V_{\log}}}
\newcommand{\logcap}{\operatorname{C_{\log}}}
\begin{document}

\title[Minimizing capacity among linear images]{Minimizing capacity among linear images of rotationally invariant conductors}
\author[]{Richard S. Laugesen}
\address{Department of Mathematics, University of Illinois, Urbana,
IL 61801, U.S.A.}
\email{Laugesen\@@illinois.edu}
\date{\today. \quad ORCID: 0000-0003-1106-7203}

\subjclass[2010]{\text{Primary 31B15. Secondary 31A15, 35J05}}
\keywords{Isoperimetric, Riesz kernel, shape optimization.}

\begin{abstract}
Logarithmic capacity is shown to be minimal for a planar set having $N$-fold rotational symmetry ($N \geq 3$),  among all conductors obtained from the set by area-preserving linear transformations. Newtonian and Riesz capacities obey a similar property in all dimensions, when suitably normalized linear transformations are applied to a set having irreducible symmetry group. A corollary is P\'{o}lya and Schiffer's lower bound on capacity in terms of moment of inertia. 
\end{abstract}

\maketitle

\section{\bf Introduction and results}

Optimality and symmetry have been intertwined since ancient times, notably in the isoperimetric theorem, which implies that the region in the plane with given area and shortest perimeter is a disk. Lord Rayleigh introduced isoperimetry into mathematical physics with his conjecture that the membrane having lowest tone of vibration must be circular, a claim proved fifty years later by Faber and Krahn. 

Poincar\'{e} raised an analogous conjecture for the electrostatic capacity of a conductor. The resulting Poincar\'{e}--Carleman--Szeg\H{o} theorem \cite{PS51} asserts that the conductor of given volume that minimizes the Newtonian capacity is the set possessing the greatest possible rotational symmetry, namely the ball. 

One might wonder what happens in situations where only partial rotational symmetry is achievable, such as in the class of polygonal domains.   
\begin{theorem*}[Solynin and Zalgaller \cite{SZ04}]
Among all $N$-sided polygons with given area, the shape minimizing the logarithmic capacity is the regular $N$-gon. 
\end{theorem*}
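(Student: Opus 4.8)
The plan is to recast the extremal problem through the exterior conformal map and then to drive an arbitrary $N$-gon toward the regular one by symmetrization. First I would take an $N$-gon $K$ with connected complement and let $\psi$ map the exterior of the unit disk conformally onto $\hat{\mathbb{C}}\setminus K$, normalized so that $\psi(w)=d\,w+a_0+\sum_{n\geq1}a_n w^{-n}$ near infinity. Two classical facts then apply: the logarithmic capacity equals the conformal radius, $\logcap(K)=d$, while the area theorem gives $\operatorname{Area}(K)=\pi\bigl(d^2-\sum_{n\geq1}n|a_n|^2\bigr)$. Holding the area fixed, minimizing $\logcap(K)=d$ is therefore equivalent to minimizing $\sum_{n\geq1}n|a_n|^2$. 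The disk, for which all $a_n=0$, is excluded because $K$ must be an $N$-gon, and that constraint forces $\psi$ to be of exterior Schwarz--Christoffel type: the coefficients $a_n$ are then determined by the $N$ prevertices on the unit circle together with the $N$ exterior angles $\beta_k\pi$, where $\beta_k>0$ and $\sum_k\beta_k=2$. In these coordinates the regular $N$-gon is the configuration with equally spaced prevertices and $\beta_k\equiv 2/N$, so the task becomes a symmetry statement about this constrained minimization.

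In parallel I would set up the variational structure directly on the vertices. Writing capacity as a function of the vertex positions $P_1,\dots,P_N$, the Hadamard shape derivative expresses its first variation as the boundary integral of the squared normal derivative of the equilibrium potential (equivalently, the squared density of equilibrium measure) weighted by the normal displacement. Imposing the area constraint through a Lagrange multiplier, the optimality condition at a minimizer asserts that this equilibrium pressure is balanced across the sides and vertices, and the expected conclusion is that balance forces all sides to have equal length and all angles to be equal. Preceding this step one must establish existence of a minimizer with control over degeneration as vertices collide or escape to infinity, and reduce to convex competitors, since a nonconvex $N$-gon can be replaced by one of the same area and no larger capacity.

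The engine that actually produces the symmetric minimizer is symmetrization, and here lies the main obstacle. Steiner symmetrization lowers capacity for fixed area but drives every set toward a disk and does not preserve the class of $N$-gons; polarization respects a single reflection axis and lowers capacity, yet cannot by itself manufacture $N$-fold rotational symmetry when $N$ is not a power of $2$. The genuinely hard part is therefore to combine symmetrization steps---most naturally a Dubinin-type dissymmetrization that redistributes the $N$ congruent sectors of a rotationally symmetric configuration---so as to compare an arbitrary $N$-gon with the regular one while simultaneously (i)~never increasing capacity, (ii)~preserving area, and (iii)~remaining inside the $N$-gon class at every stage. I would aim to prove the dissymmetrization inequality for the logarithmic kernel, apply it to pass between the regular $N$-gon and an arbitrary competitor with a definite sign certifying the regular polygon as the minimizer, and then read off uniqueness up to rigid motions and scaling from the equality case.
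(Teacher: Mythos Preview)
The paper does not prove this statement. The Solynin--Zalgaller theorem is quoted in the introduction as a known result from \cite{SZ04} to motivate the paper's own, narrower, investigation; there is no proof of it anywhere in the text. What the paper actually proves is \autoref{capacitymin} and \autoref{capacitymin2}: if $K$ already has $N$-fold rotational symmetry (e.g.\ $K$ is the regular $N$-gon), then $K$ minimizes logarithmic capacity among its own \emph{linear images} of the same area. That is a far smaller class of competitors than ``all $N$-gons,'' and the method --- concavity of the energy $E_\mu(t)$ along the one-parameter diagonal family $S_t$, combined with vanishing of the first variation by symmetry --- has no hope of reaching the full polygonal statement, since a generic $N$-gon is not a linear image of the regular one for $N\geq 4$.

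Your sketch, by contrast, is aimed at the genuine Solynin--Zalgaller theorem and is in the right spirit: the actual proof in \cite{SZ04} does hinge on a dissymmetrization argument of Dubinin type, together with a careful reduction via the exterior conformal map and Schwarz--Christoffel representation. So your outline is not wrong as a description of where the real difficulty lies, but it is not a comparison with anything in this paper, because there is nothing here to compare it to. If the assignment was to locate and reproduce the paper's proof of the displayed theorem, the correct answer is that no such proof is present; the theorem is background, and the paper's contribution begins with \autoref{capacitymin}.
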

Thus optimality again occurs for the most symmetrical shape within the class of competitors. Incidentally, the corresponding problem for the fundamental tone of the Laplacian remains open for $N \geq 5$, where it is called the polygonal Rayleigh--Faber--Krahn problem. Cheeger constants and variational energies have been considered on polygons too, in recent work by Bucur and Fragal\`{a} \cite{BF16,BF21}.

This paper establishes optimality results of a similar flavor for capacity in all dimensions, among the class of linear images of rotationally symmetric shapes. 

\subsection*{\bf Results for logarithmic capacity}

The logarithmic energy of a compact set $K \subset \Rn, n \geq 2$, is 
\[
\logV(K) = \min_\mu \int_K \int_K \log \frac{1}{|x-y|} \, d\mu(x) d\mu(y) 
\]
where the minimum is taken over all Borel probability measures (unit measures) on $K$. The minimum is attained for a measure $\mu$ called the logarithmic equilibrium measure of $K$. Boundedness of $K$ implies $-\infty < \logV(K) \leq +\infty$. If $\logV(K)<\infty$ then the equilibrium measure is unique. References for these foundational facts are provided in \autoref{sec-background}.

The logarithmic capacity is defined to be 
\[
\logcap(K) = \exp \left( -\logV(K) \right) ,
\]
so that $0 \leq \logcap(K) < \infty$. Note that if $K$ has positive area then it has finite energy (by choosing $\mu$ to be normalized area measure on $K$) and hence has positive capacity. 

A group $\mathcal{U}$ of orthogonal $n \times n$ matrices is called irreducible if the only subspaces of $\Rn$ that are invariant under the action of the group are $\Rn$ and the zero subspace. Equivalently, the group is irreducible if every nontrivial orbit spans $\Rn$, that is, if $\{ Ux : U \in \mathcal{U} \}$ spans $\Rn$ for every nonzero vector $x \in \Rn$. Examples of irreducible groups include the symmetry groups of regular $N$-gons in the plane, and of  platonic solids in $3$ dimensions. 

An isometry of a set $K \subset \Rn$ is an orthogonal $n \times n$ matrix $U$ such that $UK=K$. Call a square matrix $M$ volume-preserving if it has determinant $\pm 1$. 

Our first theorem minimizes logarithmic capacity among sets of the same volume.  
\begin{theorem}[Logarithmic capacity of linear images]  \label{capacitymin}
Let $K \subset \Rn, n \geq 2$, be a compact set with positive logarithmic capacity. If $M$ is a real, $n \times n$ volume-preserving matrix, and $K$ has an irreducible group of isometries, then 
\[
\logcap(MK) \geq \logcap(K)
\]
with equality if and only if $M$ is orthogonal. In particular, if $K$ has an irreducible isometry group and positive volume then it minimizes logarithmic capacity among all its linear images of the same volume. 
\end{theorem}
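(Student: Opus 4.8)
The plan is to pass from capacity to energy and exploit the variational (minimizing) characterization of $\logV$. Since $\logcap = \exp(-\logV)$, the desired inequality $\logcap(MK) \geq \logcap(K)$ is equivalent to $\logV(MK) \leq \logV(K)$. Let $\mu$ be the logarithmic equilibrium measure of $K$, which exists and is unique because $\logV(K) < \infty$. First I would push $\mu$ forward by $M$: the image $M_*\mu$ is a probability measure on $MK$, so by minimality
\[
\logV(MK) \leq \int_K\int_K \log\frac{1}{|M(x-y)|}\, d\mu(x)\, d\mu(y) = \logV(K) - \int_K\int_K \log|Mu|\, d\mu(x)\, d\mu(y),
\]
where $u = u(x,y) = (x-y)/|x-y|$ is the unit difference direction (well defined $\mu\times\mu$-a.e., since finite energy forces $\mu$ to be nonatomic). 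Thus the whole theorem reduces to proving that $\int\!\int \log|Mu|\, d\mu\, d\mu \geq 0$, with strict inequality unless $M$ is orthogonal.

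The key is the \emph{direction tensor} $T = \int_K\int_K u \otimes u \, d\mu(x)\, d\mu(y)$, a symmetric positive semidefinite matrix with $\tr T = 1$. Writing $A = M^\top M$ (positive definite, $\det A = 1$ since $M$ is volume-preserving) and $\log|Mu| = \tfrac{1}{2}\log(u^\top A u)$, I would argue as follows. By uniqueness of the equilibrium measure, $\mu$ is invariant under the isometry group $\mathcal{U}$ of $K$, since for $U \in \mathcal{U}$ the pushforward $U_*\mu$ is again a unit measure on $K$ with the same energy. Substituting $x \mapsto Ux$, $y \mapsto Uy$ then gives $T = UTU^\top$ for all $U \in \mathcal{U}$, so $T$ commutes with the group; because $T$ is symmetric its eigenspaces are $\mathcal{U}$-invariant, and irreducibility (a real form of Schur's lemma) forces $T = \tfrac{1}{n} I$. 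Diagonalizing $A$ with orthonormal eigenvectors $e_i$ and eigenvalues $\lambda_i>0$, $\prod \lambda_i = 1$, the concavity of $\log$ applied to the weights $(u \cdot e_i)^2$ yields $\log(u^\top A u) \geq \sum_i (u\cdot e_i)^2 \log\lambda_i$ pointwise; integrating and using $e_i^\top T e_i = 1/n$ gives $\int\!\int \log(u^\top A u) \geq \tfrac{1}{n}\log\det A = 0$. This establishes the inequality.

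For equality, the direction $M$ orthogonal $\Rightarrow$ equality is immediate, as $MK$ is then isometric to $K$ and distances, hence energy and capacity, are preserved. Conversely, equality forces $\int\!\int \log(u^\top A u)\, d\mu\, d\mu = 0$, which by the pointwise strict concavity of $\log$ means that for $\mu\times\mu$-a.e. pair the direction $u$ lies in a single eigenspace of $A$, i.e.\ $x - y \in \bigcup_j E_{\lambda_j}$ (the eigenspaces being mutually orthogonal). Here I would run a short rigidity argument: fixing a generic $a \in \operatorname{supp}\mu$, every $b,c \in \operatorname{supp}\mu$ satisfy $b-a, c-a, c-b \in \bigcup_j E_{\lambda_j}$, and since $c-b = (c-a)-(b-a)$ cannot have nonzero components in two distinct orthogonal eigenspaces while lying in one of them, all difference vectors from $a$ must fall in one common eigenspace $E_{\lambda^\ast}$. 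Hence $\operatorname{supp}\mu$ lies in the affine subspace $a + E_{\lambda^\ast}$, so $\operatorname{span}(\operatorname{supp}\mu - \operatorname{supp}\mu) \subseteq E_{\lambda^\ast}$; this span is a nonzero $\mathcal{U}$-invariant subspace, so irreducibility forces $E_{\lambda^\ast} = \Rn$, meaning $A$ has a single eigenvalue, necessarily $1$, whence $M^\top M = I$.

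I expect the main obstacle to be the equality analysis rather than the inequality: the invariance-plus-Schur identification $T = \tfrac{1}{n}I$ is clean, and the concavity bound is routine, but converting the $\mu\times\mu$-almost-everywhere "confinement to eigenspaces" into a genuine subspace statement about $\operatorname{supp}\mu$ requires the Fubini and closure-of-support bookkeeping above, together with the observation that positive capacity prevents $\operatorname{supp}\mu$ from being a single point. The rigidity step is where irreducibility does its essential second job, and it is the part most in need of careful justification.
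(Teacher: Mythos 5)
Your proof is correct, and it takes a genuinely different route from the paper's. The paper reduces $M$ to a diagonal matrix via the singular value decomposition, embeds that matrix in a one-parameter family $S_t$ of volume-preserving stretches, and runs a first/second variation argument: the dominated-convergence lemmas of \autoref{sec:differentiating} justify differentiating the energy $E_\mu(t)$ under the integral, \autoref{logcapfirstderivsymmetric} kills the first derivative at $t=0$ by Haar averaging (Schur's lemma), and \autoref{concavitypenergy} gives strict concavity, with irreducibility used a second time to rule out concentration of $\mu$ on a hyperplane. You instead compare the two endpoints directly: pushing the equilibrium measure forward by $M$ and splitting the logarithmic kernel reduces the theorem to showing $\int\!\int \log|Mu|\,d\mu\,d\mu \geq 0$ with $u=(x-y)/|x-y|$, which follows from one pointwise Jensen inequality together with the direction-tensor identity $\int\!\int u\otimes u\,d\mu\,d\mu=\tfrac1n I$. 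That identity is your Schur step and plays exactly the role of the paper's vanishing first variation, while your Jensen inequality is the integrated form of the convexity of $t\mapsto\log|S_tx|$ that the paper singles out as the heart of its proof; so the two arguments are structurally parallel, but yours is shorter and more elementary, needing no deformation family and no differentiation-under-the-integral machinery. What it gives up is generality: the decoupling $\log(1/|M(x-y)|)=\log(1/|x-y|)-\log|Mu|$ is special to the logarithmic kernel, whereas for the Riesz kernel the directional factor $|Mu|^{-p}$ multiplies rather than adds, which is why the paper's framework (and the different normalization $\lVert M^{-1}\rVert_{p,n}=1$ in \autoref{capacityminp}) is needed there. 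Your equality analysis (eigenspace rigidity on $\operatorname{supp}\mu$) also differs from the paper's (hyperplane concentration forcing $\mu$ to be a point mass) and is sound; the one step to tighten is the passage from ``$x-y$ lies in a single eigenspace for $\mu\times\mu$-a.e.\ pair'' to ``for every pair of points of $\operatorname{supp}\mu$'': no Fubini is needed, since the bad set $\{(x,y):x\neq y,\ x-y\notin\bigcup_j E_{\lambda_j}\}$ is open, and an open set of $\mu\times\mu$-measure zero cannot meet $\operatorname{supp}\mu\times\operatorname{supp}\mu$.
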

The proof is in \autoref{mainproof}. The idea in $2$ dimensions is to reduce to a diagonal matrix $M$ by the singular value decomposition, and then prove that the logarithmic energy of the set is strictly concave with respect to the $1$-parameter family $S_t = \left( \begin{smallmatrix} e^t & 0 \\ 0 & e^{-t} \end{smallmatrix} \right)$ of diagonal matrices applied to $K$ (\autoref{concavitypenergy}). The underlying point, when one strips away everything else in that proof, is convexity of the mapping
\[
t \mapsto \log |S_t x| = \frac{1}{2} \log (e^{2t}x_1^2+e^{-2t}x_2^2) , \qquad -\infty<t<\infty .
\]
Further, the logarithmic energy has vanishing first derivative at $t=0$ due to rotational symmetry of $K$ (\autoref{logcapfirstderivsymmetric}). Hence the maximal energy and thus minimal capacity are attained at $K$, as claimed in the theorem. 

These tools have some predecessors in the literature. Concavity,  superharmonicity and monotonicity results for potential theoretic energies under various deformations, transplantations and flows can be found for example in recent work by Betsakos et al.\ \cite{BKKP20} and Pouliasis \cite{P11,P21b} and in prior work by numerous authors \cite{B83,DK08,L93,PS53,S54,S92}, although none of that literature provides the tools needed in the current paper. 

In the planar case $n=2$, call $M$ area-preserving if its determinant is $\pm 1$. Since positive area implies positive capacity, \autoref{capacitymin} yields in that  case:
\begin{corollary}[Capacity of linear images of rotationally symmetric planar sets] \label{capacitymin2}
Let $K \subset \R^2$ be a compact planar set with positive logarithmic capacity. If $M$ is area-preserving and $K$ has rotational symmetry of order $N \geq 3$, then 
\[
\logcap (MK) \geq \logcap(K)
\]
with equality if and only if $M$ is orthogonal. In particular, if $K$ has $N$-fold rotational symmetry and positive area then it minimizes logarithmic capacity among all its linear images with the same area. 
\end{corollary}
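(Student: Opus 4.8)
The plan is to deduce the corollary directly from \autoref{capacitymin} by checking that its hypotheses hold in the planar setting. In dimension $n=2$ the two notions of measure-preservation coincide: a matrix is area-preserving exactly when its determinant is $\pm 1$, which is precisely the volume-preserving condition required by \autoref{capacitymin}. Consequently the only substantive task is to produce an irreducible group of isometries of $K$.

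For this I would take the cyclic group $C_N$ generated by the rotation $R$ through angle $2\pi/N$. Since $K$ has $N$-fold rotational symmetry we have $RK=K$, hence $R^kK=K$ for every $k$, so each element of $C_N$ is an isometry of $K$ and $C_N$ is a group of isometries of $K$. It then remains to verify that $C_N$ is irreducible when $N\geq 3$. Here I would invoke the orbit characterization of irreducibility recalled in the introduction: given any nonzero $x\in\R^2$, its orbit under $C_N$ contains both $x$ and $Rx$. For $N\geq 3$ the angle $2\pi/N$ is not an integer multiple of $\pi$, so $Rx$ is not a scalar multiple of $x$; thus $x$ and $Rx$ are linearly independent and the orbit spans $\R^2$. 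This establishes irreducibility of $C_N$, so $K$ possesses an irreducible group of isometries.

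With the hypotheses confirmed, \autoref{capacitymin} immediately yields $\logcap(MK)\geq\logcap(K)$ with equality if and only if $M$ is orthogonal. The final assertion of the corollary follows because positive area implies positive capacity, as noted in the introduction: a compact planar set of positive area with $N$-fold rotational symmetry satisfies every hypothesis of \autoref{capacitymin} and therefore minimizes logarithmic capacity among all its area-preserving linear images.

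There is essentially no analytic obstacle here, since all of the substantive work resides in \autoref{capacitymin}; the only point requiring care is the irreducibility check, and in particular the sharpness of the requirement $N\geq 3$. I would flag the borderline case $N=2$ explicitly: rotation by $\pi$ equals $-I$, which preserves every line, so the orbit $\{x,-x\}$ spans only a one-dimensional subspace and $C_2$ is reducible. This is exactly why rotational symmetry of order at least $3$ is needed, and why \autoref{capacitymin} cannot be applied through $C_N$ alone when $N=2$.
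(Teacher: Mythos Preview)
Your proposal is correct and follows the same approach as the paper, which simply observes that positive area implies positive capacity and invokes \autoref{capacitymin}. You supply more detail than the paper does by explicitly verifying that $C_N$ is irreducible for $N\geq 3$ and noting why $N=2$ fails; the paper takes this for granted, having mentioned earlier that symmetry groups of regular $N$-gons are examples of irreducible groups.
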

An appealing consequence is that the equilateral triangle minimizes logarithmic capacity among all triangles of the same area. This special case goes back to P\'{o}lya and Szeg\H{o} \cite[p.{\,}158]{PS51}, whose proof is quite different, relying on Steiner symmetrization and the Dirichlet integral characterization of capacity rather than the logarithmic energy approach used in this paper. For other examples of rotationally symmetric sets to which the corollary applies, see \autoref{figsets}. 
\begin{figure}
\includegraphics[scale=0.3]{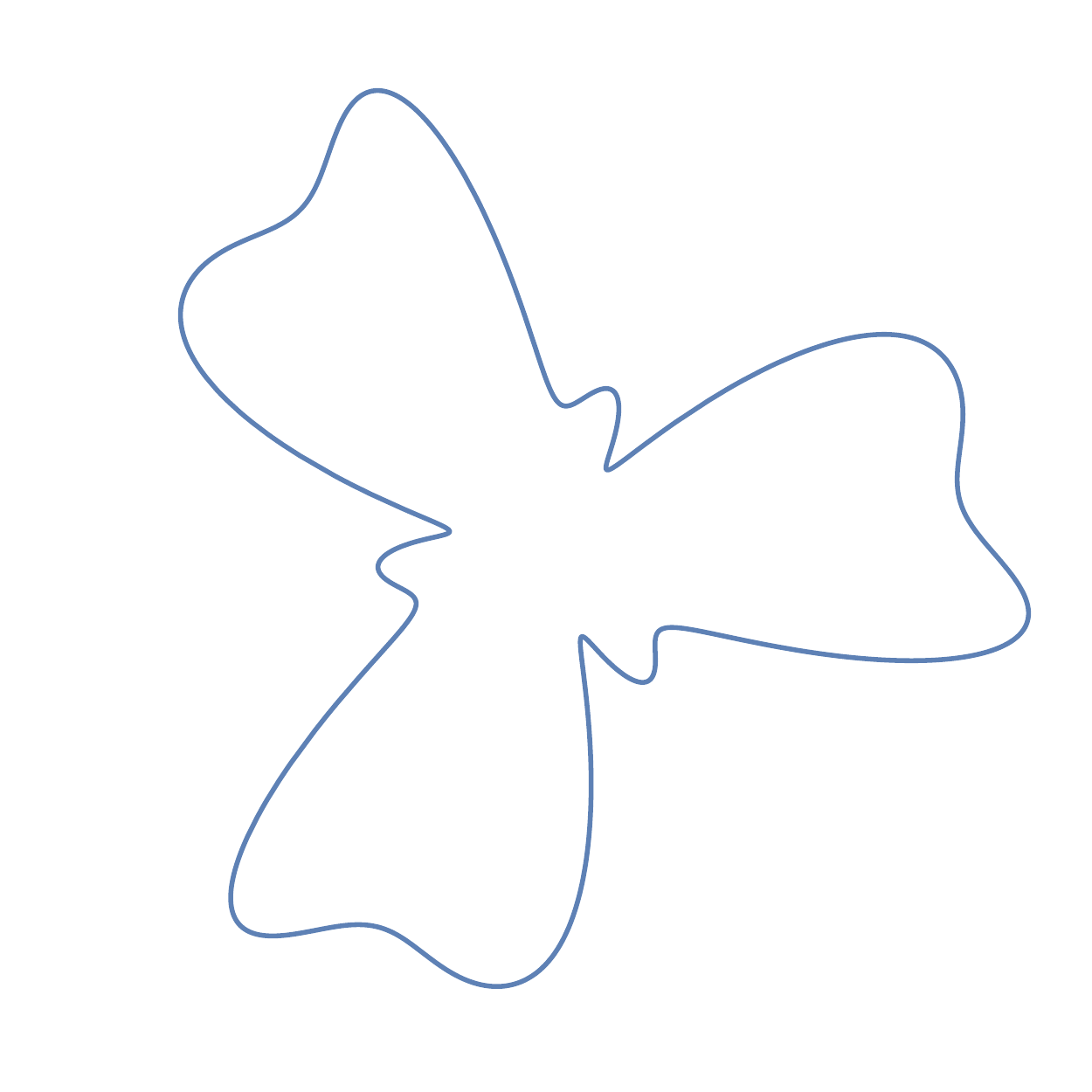} \qquad
\includegraphics[scale=0.3]{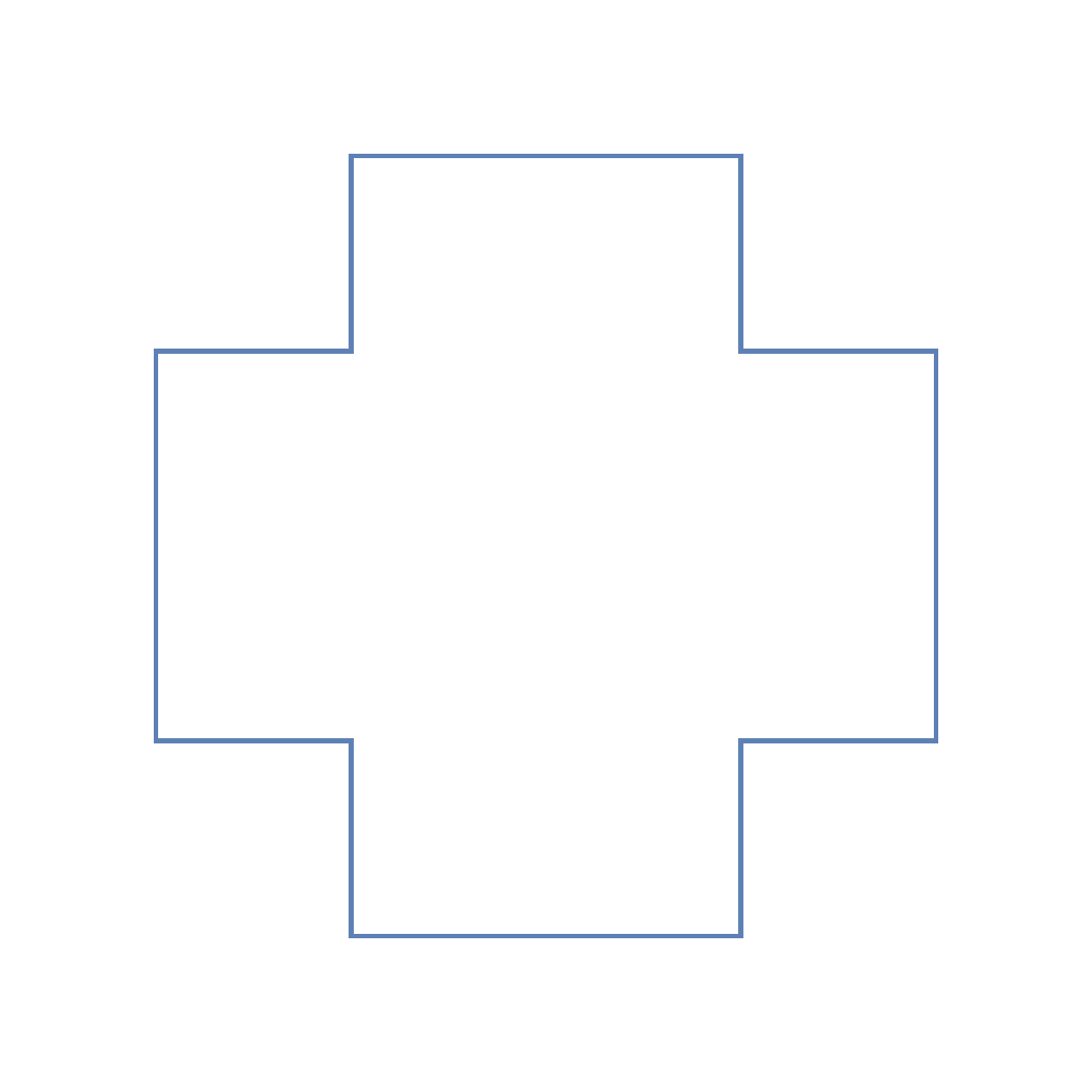} 
\caption{\label{figsets}Examples: compact sets with rotational symmetry of order $3,4$.}
\end{figure}

\autoref{capacitymin} and its corollary seem to be new, except that when $K$ is the regular $N$-gon, the corollary is a special case of Solynin and Zalgaller's theorem. 

Incidentally, a different type of minimization result for planar capacities of rotationally symmetric sets was proved by Baernstein and Solynin \cite[Sections 5,6]{BS13}. They take a circularly symmetric set contained in a sector of aperture $2\pi/N$ and compare capacities of certain unions of $\geq N$ rotations of this set with the union of rotations by $N$-th roots of unity. 

\subsection*{\bf Results for Riesz $p$-capacity}
The Riesz $p$-energy of a compact set $K \subset \Rn$ is 
\[
V_p(K) = \min_\mu \int_K \int_K \frac{1}{|x-y|^p} \, d\mu(x) d\mu(y) , \qquad 0<p<n ,
\]
where the minimum is taken over all Borel probability measures on $K$. In the literature, many authors call it the Riesz $\alpha$-energy where $\alpha=n-p$. 

The minimum is attained for a measure $\mu$ called the $p$-equilibrium measure of $K$. Boundedness of $K$ implies that the energy is positive: $0 < V_p(K) \leq +\infty$. If $V_p(K)<\infty$ then the equilibrium measure is unique. References for these facts are given in \autoref{sec-background}. 

The $p$-capacity is  
\[
C_p(K) = \frac{1}{V_p(K)^{1/p}} .
\]
Note that if the compact set $K$ has positive volume then it has finite energy (simply by choosing the measure $\mu$ in the infimum to be normalized Lebesgue measure on $K$, and using that $0<p<n$) and hence has positive capacity. 

The capacity is often defined to be $1/V_p(K)$, but we prefer the current definition because it ensures that  capacity scales linearly:  
\begin{equation} \label{eq:capacityscaling}
C_p(tK)=tC_p(K) , \qquad t > 0 .
\end{equation}
Some further intuition is gained by interpreting the logarithmic energy as the rate of change of $p$-energy in a formal asymptotic expansion at $p=0$:   
\[
V_p(K) = 1 + p \logV(K) + O(p^2) \quad \text{and hence} \quad 
C_p(K) = \logcap(K) (1+O(p)) 
\]
as $p \to 0$, by formally expanding the kernel $|x-y|^{-p}=\exp(p \log 1/|x-y|)$ in terms of the exponential series. Such asymptotic formulas are not needed for this paper. 

The special case $p=1$ is called the \textbf{Newtonian} situation in dimensions $n=2,3$, since the potential $1/|x-y|^p$ when $p=1$ takes the Newtonian form $1/r$.  

The minimization result that we state below for $p$-capacity normalizes not the volume of the linear image, but rather the $p$-norm of the singular values of the inverse matrix, as follows. Define the (Schatten) $p$-norm of a square $n \times n$ real matrix $A$ to be 
\[
\lVert A \rVert_{p,n} =  \left( \frac{1}{n} \sum_{k=1}^n \sigma_k(A)^p \right)^{\! \! 1/p} 
\]
where the $\sigma_k$ are the singular values of $A$. 
\begin{theorem}[$p$-capacity of linear images]  \label{capacityminp}
Let $K \subset \Rn, n \geq 2$, be a compact set with positive $p$-capacity, where $0<p<n$. If $M$ is a real, invertible $n \times n$ matrix with $\lVert M^{-1} \rVert_{p,n} = 1$, and $K$ has an irreducible group of isometries, then 
\[
C_p(MK) \geq C_p(K)
\]
with equality if and only if $M$ is orthogonal.   
\end{theorem}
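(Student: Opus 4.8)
The plan is to reduce to a diagonal matrix and then bound the energy of the image from above by that of $K$ itself, using the $p$-equilibrium measure of $K$ as a competitor; this is the direct analogue, streamlined for the $p$-energy, of the concavity argument sketched for \autoref{capacitymin}. \emph{Reduction to a diagonal matrix.} Write the singular value decomposition $M=U\Sigma W^{T}$ with $U,W$ orthogonal and $\Sigma=\operatorname{diag}(\sigma_1,\dots,\sigma_n)$, $\sigma_k>0$. Since $C_p$ is invariant under the rigid motion $U$, we have $C_p(MK)=C_p(\Sigma W^{T}K)$, and $W^{T}K$ has the irreducible isometry group $W^{T}\mathcal U W$, so after renaming we may assume $M=\Sigma$. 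The singular values of $M^{-1}$ are the $\sigma_k^{-1}$, so the hypothesis $\lVert M^{-1}\rVert_{p,n}=1$ becomes $\frac1n\sum_k\sigma_k^{-p}=1$. It thus suffices to prove $V_p(\Sigma K)\le V_p(K)$, with equality only when $\Sigma=I$, which is exactly the case that $M$ is orthogonal.

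\emph{Competitor and symmetrization.} Let $\mu$ be the $p$-equilibrium measure of $K$. Positive capacity makes the energy finite, so $\mu$ is unique and hence invariant under every $U\in\mathcal U$. Pushing $\mu$ forward by $\Sigma$ gives a unit measure on $\Sigma K$, whence
\[
V_p(\Sigma K)\le \iint |\Sigma(x-y)|^{-p}\,d\mu(x)\,d\mu(y).
\]
Averaging the integrand over the group (writing the finite-group average for concreteness, the general compact case being identical via Haar measure) and using invariance of $\mu$, the task reduces to the pointwise estimate
\[
\frac{1}{|\mathcal U|}\sum_{U\in\mathcal U}|\Sigma Uz|^{-p}\le |z|^{-p},\qquad z\neq 0 ,
\]
for then the competitor bound is at most $\iint|x-y|^{-p}\,d\mu\,d\mu=V_p(K)$.

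\emph{The pointwise estimate.} By homogeneity take $|z|=1$. First, convexity of $t\mapsto t^{-p/2}$ and Jensen's inequality with the weights $(Uz)_k^2$ (which sum to $1$) give
\[
|\Sigma Uz|^{-p}=\Big(\sum_k\sigma_k^2(Uz)_k^2\Big)^{-p/2}\le\sum_k(Uz)_k^2\,\sigma_k^{-p}.
\]
Second, the symmetric matrix $A=\frac{1}{|\mathcal U|}\sum_U(Uz)(Uz)^{T}$ satisfies $VAV^{T}=A$ for all $V\in\mathcal U$, so its eigenspaces are $\mathcal U$-invariant; irreducibility then forces $A=\frac1n I$ (matching traces), i.e.\ $\frac{1}{|\mathcal U|}\sum_U(Uz)_k^2=\frac1n$. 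Averaging the Jensen bound and using $\frac1n\sum_k\sigma_k^{-p}=1$ gives the estimate. This Jensen-plus-Schur computation is the technical heart, but is elementary once the two ingredients are isolated.

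\emph{Equality.} Equality throughout forces equality in Jensen for $\mu\times\mu$-almost every pair, which says precisely that $x-y$ lies in the union $\bigcup_l E_l$ of the eigenspaces of $\Sigma$. The support $S$ of $\mu$ is $\mathcal U$-invariant and nonzero, so irreducibility makes it span $\Rn$, while $S\times S\subseteq\{(x,y):x-y\in\bigcup_l E_l\}$. If $S$ met two distinct cosets $x_0+E_{l_1}$ and $x_0+E_{l_2}$ away from $x_0$, the difference of the two points would have nonzero components in two eigenspaces, a contradiction; hence $S$ lies in a single affine subspace $x_0+E_{l_0}$, and $\mathcal U$-invariance of $S$ then exhibits $E_{l_0}$ as a proper invariant subspace—impossible unless there is a single eigenspace, i.e.\ $\Sigma=I$. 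I expect this equality analysis to be the main obstacle: converting the almost-everywhere Jensen equality into a rigid geometric statement about $S$, and then extracting a forbidden invariant subspace, is the step demanding the most care.
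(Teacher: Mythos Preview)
Your main inequality is correct and takes a genuinely different route from the paper. The paper interpolates between $I$ and $\Sigma$ via a one-parameter family $S_t$ with diagonal entries $(1-t+t\sigma_k^{-p})^{-1/p}$, proves that $E_\mu(t)=\iint|S_t(x-y)|^{-p}\,d\mu\,d\mu$ is concave in $t$ (by differentiating twice through the integral and verifying $(p+1)|\dot S_t x|^2=S_tx\cdot\ddot S_tx$), shows $E_\mu'(0)=0$ by a Schur averaging lemma, and concludes $E_\mu(0)>E_\mu(1)$. Your Jensen step $|\Sigma Uz|^{-p}\le\sum_k(Uz)_k^2\sigma_k^{-p}$ followed by the group average is a one-shot bound that bypasses the differential calculus entirely; the Schur computation you use for $\frac{1}{|\mathcal U|}\sum_U(Uz)(Uz)^T=\frac{1}{n}I$ plays the same role as the paper's \autoref{averagingnD}, but you apply it directly to the integrand rather than to a first-derivative formula. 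The paper's approach yields monotonicity along the whole path and a framework tuned to the specific kernel; yours is shorter and gives a transparent pointwise equality condition.

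Your equality analysis has a small but genuine gap. From $S\subseteq x_0+E_{l_0}$ and $\mathcal U$-invariance of $S$ you conclude that $E_{l_0}$ is a $\mathcal U$-invariant subspace, but that step is not justified: $\mathcal U$-invariance of a subset of an affine flat does not force invariance of the flat's direction space, and your earlier observation that $S$ spans $\R^n$ only gives $\dim E_{l_0}\ge n-1$. The fix is one line. Since $S\times S\subseteq\{x-y\in\bigcup_l E_l\}$ (here you should note the set on the right is closed, so the a.e.\ condition upgrades to the full support), your coset argument actually gives $S-S\subseteq E_{l_0}$; then $V=\operatorname{span}(S-S)$ is $\mathcal U$-invariant because $U(S-S)=S-S$, and $V\subseteq E_{l_0}$. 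Irreducibility forces $V=\{0\}$ or $V=\R^n$; the first makes $\mu$ a point mass with infinite energy, so $E_{l_0}=\R^n$ and the normalization gives $\Sigma=I$. The paper's equality argument is different in form but identical in spirit: it shows the strict-concavity hypothesis of its \autoref{concavitypenergy} would fail, forcing $\mu$ to have full measure on a coordinate hyperplane, and then intersects images of that hyperplane under isometries to collapse the support to a single point.
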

See \autoref{mainproof} for the proof, which is analogous to the proof in the logarithmic case, but with new features that lead to the normalization in terms of the matrix $p$-norm. 

The theorem implies an old result of P\'{o}lya and Schiffer involving capacity and moment of inertia, as we proceed to explain. Suppose $K$ has positive volume $V(K)$ and let 
\[
  I(K) = \text{second moment of mass about the centroid} = \int_K |x-\overline{x}|^2 \, dx ,
\]
where the centroid is $\overline{x} = \int_K x \, dx/V(K)$. Recall that $I$ can be interpreted as an average moment of inertia with respect to a randomly chosen axis through the centroid (e.g.\ \cite[Proposition 7]{FLL07}). It can also be viewed as the moment of inertia of $K$ in $\R^{n+1}$ with respect to the axis perpendicular to $\Rn$ and passing through the centroid. 
\begin{corollary}[$p$-capacity normalized by moment of inertia and volume]\label{capmoment}
Let $K \subset \Rn, n \geq 2$, be a compact set with positive volume, and assume 
\[
\begin{cases}
0 < p < 2 & \text{if $n=2$,} \\
0 < p \leq \frac{2}{n-1} & \text{if $n \geq 3$.} 
\end{cases}
\]
If $K$ has an irreducible group of isometries then the scale invariant quantity 
\[
C_p \, \sqrt{\frac{I^{n-1}}{V^{n+1}}} 
\]
is minimal at $K$ among all nondegenerate linear images of that set, with equality if and only if the linear map is a euclidean similarity (a nonzero scalar multiple of an orthogonal transformation).

In particular, for the Newtonian ($p=1$) capacity of sets contained in $3$-dimensional space: \\
(i) if $K \subset \R^2$ has rotational symmetry of order $\geq 3$ and positive area then $C_1 \sqrt{I/A^3}$ is minimal at $K$ among all nondegenerate linear images; \\
(ii) if $K \subset \R^3$ has an irreducible group of isometries and positive volume then $C_1 I/V^2$ is minimal at $K$, among all nondegenerate linear images. 
\end{corollary}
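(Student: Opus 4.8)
The plan is to evaluate the scale-invariant quantity on an arbitrary nondegenerate linear image $MK$ (translations are irrelevant, since capacity is translation invariant and $I$ is measured about the centroid) and to reduce the claim to a single inequality among the singular values $\sigma_1,\dots,\sigma_n$ of $M$. Three ingredients feed the reduction. First, \autoref{capacityminp} together with the scaling relation \eqref{eq:capacityscaling}, applied by writing $M=cM_0$ with $\lVert M_0^{-1}\rVert_{p,n}=1$, gives the capacity bound $C_p(MK)\ge C_p(K)/\lVert M^{-1}\rVert_{p,n}$. Second, the volume transforms exactly, $V(MK)=\lvert\det M\rvert\,V(K)$. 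Third — and this is where irreducibility does its work a second time — the inertia tensor $T=\int_K (x-\overline{x})(x-\overline{x})^{\top}\,dx$ commutes with every isometry of $K$, so its eigenspaces are invariant subspaces and hence, by irreducibility, trivial; thus $T=\tfrac{I(K)}{n}\,\mathrm{Id}$. Moreover the centroid is fixed by every isometry and therefore, again by irreducibility, lies at the origin.

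With the centroid at the origin I would compute $I(MK)=\lvert\det M\rvert\int_K \lvert Mx\rvert^2\,dx=\lvert\det M\rvert\,\tr(M^{\top}M\,T)=\lvert\det M\rvert\,\tfrac{I(K)}{n}\sum_k\sigma_k^2$. Substituting these three facts into $C_p\sqrt{I^{n-1}/V^{n+1}}$, the factors of $\lvert\det M\rvert=\prod_k\sigma_k$ and of $I(K),V(K)$ cancel against the corresponding quantities for $K$, and the inequality $C_p(MK)\sqrt{I(MK)^{n-1}/V(MK)^{n+1}}\ge C_p(K)\sqrt{I(K)^{n-1}/V(K)^{n+1}}$ collapses to the purely algebraic statement
\[
\Big(\tfrac1n\sum_k\sigma_k^2\Big)^{(n-1)/2}\ \ge\ \Big(\tfrac1n\sum_k\sigma_k^{-p}\Big)^{1/p}\prod_k\sigma_k .
\]
Both sides are homogeneous of degree $n-1$ in $(\sigma_1,\dots,\sigma_n)$, so I may normalize $\prod_k\sigma_k=1$.

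This power-mean inequality is the core of the matter, and I expect it to be the main obstacle as well as the source of the restriction on $p$. Writing $u_k=\sigma_k^2$ (so $\prod_k u_k=1$) and $\rho=\tfrac{p(n-1)}2$, the normalized inequality reads $\big(\tfrac1n\sum_k u_k\big)^{\rho}\ge\tfrac1n\sum_k u_k^{-p/2}$. Since $u_k^{-1}=\prod_{j\ne k}u_j$, the arithmetic–geometric mean inequality for the $n-1$ factors $\{u_j\}_{j\ne k}$ gives $u_k^{-p/2}\le\big(\tfrac1{n-1}\sum_{j\ne k}u_j\big)^{\rho}=\big(\tfrac{nA-u_k}{n-1}\big)^{\rho}$, where $A=\tfrac1n\sum_k u_k$. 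Averaging over $k$ and setting $v_k=u_k/A$ (so $\tfrac1n\sum_k v_k=1$) reduces everything to
\[
\frac1n\sum_k\Big(\frac{n-v_k}{n-1}\Big)^{\rho}\le 1 .
\]
Because $\tfrac1n\sum_k\tfrac{n-v_k}{n-1}=1$, this is precisely Jensen's inequality for $t\mapsto t^{\rho}$, which is concave exactly when $\rho\le1$, that is, when $p\le\frac{2}{n-1}$. This is where the hypothesis on $p$ enters; for $n=2$ the definition of the Riesz energy already forces $p<n=2$, giving $\rho<1$.

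Finally, the equality characterization rides entirely on the strict clause of \autoref{capacityminp}: if $M$ is not a euclidean similarity then its singular values are not all equal, so the normalized matrix $M_0$ is not orthogonal and the capacity bound is strict, whence the full quantity is strictly larger at $MK$ than at $K$. Consequently I only need the displayed power-mean inequality in its non-strict form, which is exactly what makes the endpoint $p=\frac{2}{n-1}$ (where $\rho=1$ and Jensen degenerates to an identity) admissible for $n\ge3$. Specializing to $p=1$ then recovers the two Newtonian statements: case (i) is $n=2$ with $p=1<2$, giving the factor $C_1\sqrt{I/A^3}$, and case (ii) is $n=3$ at the boundary $p=1=\frac{2}{n-1}$, giving $C_1\,I/V^2$, which is the P\'{o}lya–Schiffer inequality.
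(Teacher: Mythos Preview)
Your argument is correct and follows the same architecture as the paper's proof: invoke the scale-invariant form of \autoref{capacityminp} to get $C_p(MK)\,\lVert M^{-1}\rVert_{p,n}\ge C_p(K)$, compute $I(MK)=\lvert\det M\rvert\,\lVert M\rVert_{2,n}^2\,I(K)$ from the fact that the moment matrix of $K$ is scalar (this is the paper's \autoref{momentformulas}), and reduce everything to the singular-value inequality
\[
\Big(\tfrac1n\sum_k\sigma_k^2\Big)^{(n-1)/2}\ \ge\ \Big(\tfrac1n\sum_k\sigma_k^{-p}\Big)^{1/p}\prod_k\sigma_k,
\]
with the strict equality clause coming entirely from \autoref{capacityminp}, exactly as you say.

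The only genuine difference is how this algebraic inequality is proved. The paper first passes from the $p$-mean to the $q$-mean with $q=2/(n-1)$ (power-mean monotonicity, using $p\le q$), and then writes $\lvert\det M\rvert\,\lVert M^{-1}\rVert_{q,n}=\big(\tfrac1n\sum_k\prod_{j\ne k}\sigma_j^q\big)^{1/q}$ and bounds this by $\lVert M\rVert_{2,n}^{n-1}$ via a cyclic discrete H\"older inequality with $n-1$ equal exponents $2/q$. You instead normalize $\prod\sigma_k=1$, set $u_k=\sigma_k^2$, use AM--GM on $u_k^{-1}=\prod_{j\ne k}u_j$ to get $u_k^{-p/2}\le\big(\tfrac{nA-u_k}{n-1}\big)^{\rho}$ with $\rho=p(n-1)/2$, and close with Jensen for the concave map $t\mapsto t^\rho$. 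Both routes are short and elementary; the paper's has the minor advantage of isolating the clean intermediate estimate $\lVert M^{-1}\rVert_{q,n}\le\lVert M\rVert_{2,n}^{n-1}/\lvert\det M\rvert$ (valid without any restriction on $p$), while yours makes the role of the threshold $\rho\le1$ more transparent as the concavity condition in Jensen.
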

The theorem is stronger than the corollary, in the sense that the derivation in \autoref{capmomentproof} invokes a H\"{o}lder inequality in order to estimate the $p$-norm of $M^{-1}$ in terms of the $2$-norm of $M$. 

P\'{o}lya and Schiffer \cite[(1.4) and {\S}4.4;  (1.11) and {\S}4.3]{PS53} proved the Newtonian special cases (i) and (ii) of \autoref{capmoment}. Their approach relies on the Thomson variational characterization of the Newtonian energy in terms of divergence-free fields. The stronger result \autoref{capacityminp} in this paper relies instead on the potential theoretic variational characterization in terms of kernels and measures.  

In the opposite direction, P\'{o}lya and Schiffer proved complementary upper bounds on capacity, for example that if $K \subset \R^2$ has rotational symmetry of order $\geq 3$ and positive area then $C_1 \sqrt{A/I}$ is maximal at $K$ among all nondegenerate linear images.

Next we examine the special case of $2$-capacity in \autoref{capacityminp}. The quantity
\begin{equation} \label{alphadef}
\alpha(K) = \sqrt{\frac{I(K)}{V(K)^{1+2/n}}}
\end{equation}
is scale invariant, and can be regarded as quantifying the asymmetry of $K$, since by an easy argument with mass transplantation, $\alpha(K)$ is minimal when $K$ is a ball. This asymmetry functional is different from Fraenkel's asymmetry, which appeared in the lower bounds on capacity by Hall, Hayman and Weitsman \cite{HHW91}. 

The next result restates \autoref{capacityminp} in  terms of capacity and asymmetry, for $p=2$. 
\begin{corollary}[$2$-capacity of linear images]  \label{capacityminn-2}
Let $K \subset \Rn, n \geq 3$, be a compact set with positive volume. If $K$ has an irreducible group of isometries, then 
\[
\alpha(M^{-1}K) \, \frac{C_2}{V^{1/n}} (MK) \geq \alpha(K) \, \frac{C_2}{V^{1/n}} (K) 
\]
whenever the matrix $M$ is invertible, with equality if and only if $M$ is a euclidean similarity. 
\end{corollary}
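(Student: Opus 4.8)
The plan is to deduce this from \autoref{capacityminp} at $p=2$ (legitimate since $2<n$ when $n\geq 3$) by converting the stated inequality into a statement about the normalized matrix $2$-norm. The first step is to record how the moment of inertia transforms under a linear map $A$. The centroid may be taken at the origin without loss: if $U K=K$ then $U\overline{x}=\overline{x}$, so $\overline{x}$ lies in the subspace of common fixed vectors of the isometry group, which is an invariant subspace and hence (by irreducibility, the group being nontrivial) equals $\{0\}$. A change of variables then gives
\[
I(AK)=|\det A|\int_K |Ay|^2\,dy=|\det A|\,\tr\!\bigl(A^{\mathsf T}A\,Q\bigr),\qquad Q=\int_K yy^{\mathsf T}\,dy .
\]
The key structural input is that $Q$ is isotropic: since $UK=K$ for every isometry $U$, a change of variables yields $Q=UQU^{\mathsf T}$, so $Q$ commutes with the group. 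As $Q$ is symmetric, each of its eigenspaces is then an invariant subspace, hence trivial or all of $\Rn$; irreducibility forces a single eigenvalue, so $Q=(I(K)/n)\,\mathrm{Id}$. Consequently $\tr(A^{\mathsf T}A\,Q)=I(K)\,\lVert A\rVert_{2,n}^2$ and
\[
I(AK)=|\det A|\,I(K)\,\lVert A\rVert_{2,n}^2 .
\]

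With this formula in hand I would compute the two scale-invariant factors, writing $d=|\det M|$. Combining $V(AK)=|\det A|\,V(K)$ with the displayed transformation law and the definition \eqref{alphadef} gives, by a direct tracking of exponents, $\alpha(M^{-1}K)=d^{1/n}\,\alpha(K)\,\lVert M^{-1}\rVert_{2,n}$, while $V(MK)^{1/n}=d^{1/n}V(K)^{1/n}$. Multiplying the two factors appearing in the corollary, the powers of $d$ cancel and the target inequality collapses to the reduced statement
\[
\lVert M^{-1}\rVert_{2,n}\,C_2(MK)\ \geq\ C_2(K) .
\]

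The final step normalizes and invokes the theorem. Set $\lambda=\lVert M^{-1}\rVert_{2,n}$ and $\widetilde M=\lambda M$; then $\widetilde M^{-1}=\lambda^{-1}M^{-1}$ has $\lVert \widetilde M^{-1}\rVert_{2,n}=\lambda^{-1}\lambda=1$, so \autoref{capacityminp} applies to $\widetilde M$ and yields $C_2(\widetilde M K)\geq C_2(K)$, with equality if and only if $\widetilde M$ is orthogonal. The scaling law \eqref{eq:capacityscaling} gives $C_2(\widetilde M K)=\lambda\,C_2(MK)=\lVert M^{-1}\rVert_{2,n}\,C_2(MK)$, which is precisely the reduced inequality; positivity of $C_2$ throughout is guaranteed by the positive volume hypothesis. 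Equality thus holds exactly when $\widetilde M$ is orthogonal, i.e.\ when $M$ is a nonzero scalar multiple of an orthogonal matrix, a euclidean similarity.

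I expect the only genuine content, and hence the main obstacle, to be the isotropy of the second-moment tensor $Q$; everything after that is bookkeeping of determinants and scaling exponents. Once the eigenspace argument pins $Q$ down as a multiple of the identity, the matrix $2$-norm $\lVert\cdot\rVert_{2,n}$ emerges naturally from $\tr(A^{\mathsf T}A)$, matching exactly the normalization in \autoref{capacityminp}, and the corollary follows with no further analysis.
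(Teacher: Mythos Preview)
Your proof is correct and follows essentially the same route as the paper: the paper packages your isotropy argument for $Q$ as \autoref{momentformulas} in the appendix, then derives the identity \eqref{2normidentity} (your formula $\alpha(M^{-1}K)=d^{1/n}\alpha(K)\lVert M^{-1}\rVert_{2,n}$ rewritten) and substitutes it into the scale-invariant form $C_2(MK)\lVert M^{-1}\rVert_{2,n}\geq C_2(K)$ of \autoref{capacityminp}, exactly as you do. The only cosmetic difference is that the paper invokes that scale-invariant form directly rather than spelling out the normalization $\widetilde M=\lambda M$.
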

The corollary is proved in \autoref{corproofsec}. Note each of the four factors in the inequality is scale invariant with respect to replacing $K$ by $tK$, and with respect to replacing $M$ by $tM$, for $t>0$.  

An unusual feature of \autoref{capacityminn-2} is that the capacity is computed on one domain, $MK$, while the geometric asymmetry factor $\alpha(\cdot)$ is computed on an auxiliary domain, namely on $M^{-1}K$. The same phenomenon occurred some years ago in my joint work with Siudeja for Laplacian eigenvalue sums  on linear images of rotationally symmetric domains in higher dimensions \cite[Corollary 2]{LS11}. That one together with the earlier planar paper \cite{LS11a} helped inspire the present work on capacity. 

The corollary is discussed further in the next section. 

\section{\bf Open problems under volume constraint}

\subsection*{Volume normalization}
Do the $p$-capacity results in \autoref{capacityminp} and \autoref{capacityminn-2} hold under volume normalization, like the logarithmic capacity results in \autoref{capacitymin} and \autoref{capacitymin2}?
\begin{conjecture}[$p$-capacity of linear images under volume normalization] \label{convolume}
Let $n \geq 2$ and $0<p<n$. If $K$ is a compact set in $n$ dimensions that has an irreducible group of isometries then   
\[
C_p(MK) \geq C_p(K) 
\]
whenever $M$ is volume-preserving. In particular, if $K$ has positive volume then the scale invariant quantity $C_p/V^{1/n}$ is minimal at $K$ among all linear images:  
\[
\frac{C_p}{V^{1/n}} (MK) \geq \frac{C_p}{V^{1/n}}(K) .
\]
\end{conjecture}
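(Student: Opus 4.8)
Toward \autoref{convolume}, the plan is to imitate the proof of \autoref{capacitymin}, replacing the single deformation parameter by an $(n-1)$-dimensional family sweeping out the volume-preserving diagonal matrices. First note that the displayed ``in particular'' inequality follows from the main one by scaling: given any nondegenerate $M$, the matrix $M_0 = M/|\det M|^{1/n}$ is volume-preserving, and since $C_p(MK) = |\det M|^{1/n} C_p(M_0 K)$ while $V(MK)^{1/n} = |\det M|^{1/n} V(K)^{1/n}$, the bound $C_p(M_0 K) \geq C_p(K)$ transfers to $\frac{C_p}{V^{1/n}}(MK) \geq \frac{C_p}{V^{1/n}}(K)$. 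So it suffices to prove $C_p(MK) \geq C_p(K)$ for volume-preserving $M$. By the singular value decomposition write $M = U \Sigma W$ with $U,W$ orthogonal and $\Sigma = \operatorname{diag}(\sigma_1,\dots,\sigma_n)$, where $\prod_k \sigma_k = |\det M| = 1$. Since capacity is orthogonally invariant, $C_p(MK) = C_p(\Sigma(WK))$ and $C_p(K) = C_p(WK)$, and $WK$ inherits an irreducible isometry group (the conjugate $W\mathcal{U}W^{-1}$). After renaming $WK$ as $K$, it remains to show $V_p(\Sigma K) \leq V_p(K)$ for diagonal volume-preserving $\Sigma$ (recall $C_p = V_p^{-1/p}$ is decreasing). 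Parametrize such matrices by $\Sigma_t = \operatorname{diag}(e^{t_1},\dots,e^{t_n})$ with $t$ on the hyperplane $\{\sum_k t_k = 0\}$, so $\Sigma_0 = I$.

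Next I would establish the first-order condition: that $t=0$ is a critical point of $t \mapsto V_p(\Sigma_t K)$ along the hyperplane. Writing $V_p(\Sigma_t K) = \min_\mu F(\mu,t)$ with $F(\mu,t) = \iint_{K\times K} |\Sigma_t(x-y)|^{-p}\,d\mu(x)\,d\mu(y)$ and differentiating under the envelope theorem at the $p$-equilibrium measure $\mu$, the partial derivative in $t_k$ at $t=0$ equals $-p\iint |x-y|^{-p-2}(x_k-y_k)^2\,d\mu\,d\mu$. Because $K$ has an irreducible isometry group, its equilibrium measure is invariant under that group, so the symmetric matrix $Q_{jk} = \iint |x-y|^{-p-2}(x_j-y_j)(x_k-y_k)\,d\mu\,d\mu$ satisfies $Q = UQU^{-1}$ for every group element $U$ and hence, by irreducibility (Schur's lemma), equals a scalar multiple $cI$ of the identity. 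The directional derivative in any trace-free direction $t$ is therefore $-pc\sum_k t_k = 0$. This is the exact analogue of \autoref{logcapfirstderivsymmetric}, and it shows the identity is a critical point.

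The remaining, and decisive, step is to show this critical point is a \emph{global} maximum of the energy on the hyperplane, and this is where I expect the real obstacle to lie. In the logarithmic setting the conclusion was free: the map $t \mapsto \log|\Sigma_t z| = \tfrac12\log\sum_k e^{2t_k}z_k^2$ is genuinely convex, so $-\log|\Sigma_t z|$ is concave, the energy is concave for every fixed measure, and an infimum of concave functions is concave. For the Riesz kernel the analogous function $t \mapsto |\Sigma_t z|^{-p} = \bigl(\sum_k e^{2t_k}z_k^2\bigr)^{-p/2}$ is only \emph{log-concave}, not concave---indeed a direct computation along $\Sigma_t = \operatorname{diag}(e^t,e^{-t})$ shows its second derivative changes sign once the singular values separate enough. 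Consequently $F(\mu,\cdot)$ need not be concave and the clean ``infimum of concave functions'' argument collapses. One still has the second-variation decomposition $\frac{d^2}{ds^2}V_p(\Sigma_{st}K) = \partial_s^2 F(\mu^\ast,s) - (\text{a nonnegative measure-response term})$, in which the response of the equilibrium measure $\mu^\ast$ enters with the favorable, energy-lowering sign; but $\partial_s^2 F(\mu^\ast,s)$ itself can be positive, so establishing global maximality requires proving that the adaptation of the equilibrium measure quantitatively compensates for the non-concavity of the kernel. I would attempt this through a direct second-variation computation anchored on the Euler--Lagrange equation for the $p$-equilibrium measure. The local second-order condition at $t=0$ is plausibly still forced by the symmetry, as in the logarithmic and Schatten-normalized cases; the essential difficulty, and the reason the statement remains conjectural, is the passage from a local to a global maximum, where the genuine convexity that rescued the logarithmic argument is simply unavailable.
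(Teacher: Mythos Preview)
The statement you are attempting is \emph{Conjecture}~\ref{convolume}: the paper states it as an open problem and gives no proof. There is therefore nothing to compare your argument against; the paper's own position is precisely the one you arrive at in your final paragraph, namely that the method of \autoref{capacitymin} does not extend because the Riesz integrand $t\mapsto |\Sigma_t z|^{-p}$ fails to be concave along the volume-preserving family $\Sigma_t=\operatorname{diag}(e^{t_1},\dots,e^{t_n})$.

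Your analysis of \emph{why} the argument breaks is accurate and matches the paper's perspective. The reduction to diagonal matrices via SVD, the first-variation computation using invariance of the equilibrium measure and Schur's lemma, and the identification of log-concavity (rather than concavity) of $|\Sigma_t z|^{-p}$ are all correct. What the paper does instead, in \autoref{capacityminp}, is abandon the volume constraint in favor of the Schatten constraint $\lVert M^{-1}\rVert_{p,n}=1$ and replace your exponential family by the tailor-made family $S_t$ with diagonal entries $(1-t+t\sigma_k^{-p})^{-1/p}$; that family is engineered so that each entry $d(t)$ satisfies $(p+1)(d')^2=dd''$, which is exactly the identity needed to make the pointwise concavity inequality \eqref{Sinequality} hold with equality. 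Under volume normalization no such family is available, and the paper explicitly records the resulting gap as \autoref{convolume}.

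In short: your proposal is not a proof, and you correctly recognize it is not. The obstruction you name is the genuine one, and the paper does not overcome it.
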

The conjecture is stronger than \autoref{capacityminp}, as we now explain. The inequality $C_p(MK) \geq C_p(K)$ proved in \autoref{capacityminp} for an invertible matrix $M$ with $\lVert M^{-1} \rVert_{p,n} = 1$ is equivalent to having 
\[
C_p(MK) \lVert M^{-1} \rVert_{p,n} \geq C_p(K) 
\]
for all invertible $M$, by the scaling relation \eqref{eq:capacityscaling} for capacity. Meanwhile, \autoref{convolume} has scale invariant form
\[
C_p(MK) |\det M|^{-1/n} \geq C_p(K) .
\]
The connection is that Jensen's inequality applied to $s \mapsto \exp(ps)$ shows  
\begin{equation} \label{mnormdet}
\lVert M^{-1} \rVert_{p,n} \geq \left( \exp \Big( p \, \frac{1}{n} \sum_{k=1}^n \log \sigma_k(M^{-1}) \Big) \right)^{\! 1/p} = |\det M^{-1}|^{1/n} ,
\end{equation}
and so \autoref{convolume} implies \autoref{capacityminp}. 

Further, \autoref{convolume} would imply \autoref{capacityminn-2} directly, because linear maps increase the asymmetry of a symmetric set: 
\[
\frac{\alpha(M^{-1} K)}{\alpha(K) } = \frac{V(MK)^{1/n} }{V(K)^{1/n}} \, \lVert M^{-1} \rVert_{2,n} \geq 1 
\]
by using identity \eqref{2normidentity} and then inequality \eqref{mnormdet}. P\'{o}lya and Schiffer \cite[p.{\,}308]{PS53} already knew this fact about asymmetry, at least in dimension $2$ where they needed it. 

For $K$ a ball in dimension $n=3$, \autoref{convolume} can be verified numerically in the Newtonian case $p=1$ by means of the explicit formula for the capacity of an ellipsoid \cite[pp.{\,}429--431]{H89}, \cite[p.{\,}165]{L72}. Or, one could call on the more general Poincar\'{e}--Carleman--Szeg\H{o} theorem, which is discussed  below.

\subsection*{Simplices}
\autoref{convolume} would imply a striking new extremality property for simplices. In $3$ dimensions, for example, the $p$-capacity would be minimal for the regular tetrahedron among all tetrahedra having the same volume. In this statement, one uses that every tetrahedron is the image of the regular one under a linear transformation, up to translation. 

\subsection*{Does the ball minimize $p$-capacity?}
Turning our attention to arbitrary compact sets rather than linear images of a fixed set, we mention that minimality of $p$-capacity at the ball among compact sets of given volume is known by the Poincar\'{e}--Carleman--Szeg\H{o} theorem in the generalized Newtonian case ($p=n-2$). That theorem has been extended to the $\alpha$-stable process case ($n-2<p<n$) by Watanabe \cite[p.{\,}489]{W83}; see also Betsakos \cite{B04a,B04b} and M\'{e}ndez--Hern\'{a}ndez \cite{MH06}. Rearrangement and polarization methods underlie these results, along with probabilistic characterizations of the capacity. 

The case $0<p<n-2$ apparently remains open: 
\begin{problem}[Minimizing $p$-capacity of compact sets under volume normalization] \label{convolumegeneral}
Let $n \geq 3$ and $0<p<n-2$. If $K \subset \Rn$ is a compact set and $B$ is a closed ball of the same volume, is it true that 
\[
C_p(K) \geq C_p(B) \ ?
\]
\end{problem}
For background on this problem, see the papers above and the mention by Mattila \cite[p.{\,}193]{M90}.

\section{\bf Potential theoretic background}
\label{sec-background}

This section provides references for the existence and uniqueness of equilibrium measures, and briefly discusses the physical interpretation of equilibrium measure and capacity. 

\subsection*{Existence of the equilibrium measure} A straightforward compactness argument applied to the collection of unit Borel measures on $K$ yields the existence of a measure achieving the minimum in the definition of the energy, for both the logarithmic and Riesz situations \cite[pp.{\,}131--132]{L72}, and indeed for much more general kernels too \cite[Lemma 4.1.3]{BHS19}. 

\subsection*{Uniqueness of the equilibrium measure} Uniqueness of the logarithmic equilibrium measure is well known in $2$ dimensions, on sets of positive logarithmic capacity. A proof can be found, for example, in the monograph by Landkof \cite[pp.\,133,167--168]{L72}, with an appealing variation in Saff and Totik \cite[Theorem I.1.3, Lemma I.1.8]{ST97}. 

In higher dimensions, Anderson and Vamanamurthy \cite[p.\,3 and Lemma 1]{AV88} observe that the proof in Landkof can be adapted to all dimensions. Cegrell, Kolodziej and Levenberg \cite[Theorem 2.5]{CKL98} proved the key result \cite[Lemma 1]{AV88} in detail, namely, that if a signed, compactly supported measure has net mass zero, then its logarithmic energy is nonnegative and the energy equals $0$ if and only if the measure vanishes everywhere. (Those authors go further, and handle measures of unbounded support subject to a growth bound.) Given this result, the uniqueness of logarithmic equilibrium measure follows quickly as in the planar case, by considering two equilibrium measures  and taking their average to deduce that the measures must in fact agree. Incidentally, many geometric properties of logarithmic capacity in higher dimensions have been investigated recently by Xiao \cite{X18,X20}.

In all dimensions, a modern treatment of uniqueness for logarithmic equilibrium measure is given by Borodachov, Hardin and Saff \cite[Theorem 4.4.8]{BHS19}. They also treat uniqueness of the equilibrium measure for Riesz $p$-capacity, when $0<p<n$, on compact sets of positive capacity \cite[Theorem 4.4.5]{BHS19}, and their approach can handle even more general families of kernels. The Riesz case is of course a standard result \cite[pp.{\,}132--133]{L72}. 

\subsection*{Formulas for capacities of special sets} Rather few explicit formulas are known for capacity. Results for various special sets can be found in \cite[pp.{\,}429, 434--436]{H89} and \cite[pp.{\,}163, 165--167, 172--173]{L72}. Notably, no formula is known for the capacity of a cube in $3$ dimensions. 

\subsection*{Physical interpretation of energy} The Newtonian energy $V_1(K)$ in dimension $n=3$ represents the least electrostatic energy (work) required to bring in one total unit of positive charges from infinity and place them on a conductor of shape $K$. The charges are then kept in place by their mutual repulsion. The equilibrium measure describes this lowest-energy distribution of charges. 

Logarithmic energy in $2$ dimensions can be interpreted similarly, by extending the charge distribution uniformly in the vertical direction and measuring the energy per unit length; see \cite[Section 1.5]{L93thesis}. 

\subsection*{Variational capacities} The generalized Newtonian capacity ($p=n-2$) can be characterized in terms of minimizing the Dirichlet integral $\int |\nabla u|^2 \, dx$ of a function $u$ that equals $1$ on $K$ and vanishes at infinity. In other words, the Riesz $(n-2)$-capacity equals the variational $2$-capacity. A direct relationship between $p$-Riesz and $\alpha$-variational capacity ($\min_u \int |\nabla u|^\alpha \, dx$) does not seem to exist for other values of $p$. For more about variational capacities, see Adams and Hedberg \cite{AH96}.

\section{\bf Differentiating the energy functional, for a fixed measure}
\label{sec:differentiating}

We work here with a general family of kernels, so that the logarithmic and $p$-energies can be handled in a unified manner. Throughout the section, $\Phi(r)$ is a smooth, real-valued function for $r>0$, with $\Phi(r) \to \infty$ as $r \to 0$. Let $\Phi(0)=\infty$. In the lemmas that follow, we employ certain assumptions on the kernel:
\begin{align} 
|\Phi(ar)| & \leq C(1+|\Phi(r)|) , \label{Phiassumption1} \\
|\Phi^\prime(ar)| & \leq Cr^{-1} (1+|\Phi(r)|) , \label{Phiassumption2} \\
|\Phi^{\prime\prime}(ar)| & \leq Cr^{-2} (1+|\Phi(r)|) , \label{Phiassumption3}
\end{align}
for all $r > 0$ and $1/2 \leq a \leq 2$, for some constant $C>0$. 

These conditions are easily verified for the logarithmic kernel $\Phi(r)=\log 1/r$ with constant $C=4$, and for the Riesz kernel $\Phi(r)=1/r^p$ with $C=2^{p+2} (1+p+p^2)$. 

Suppose $T$ is an open interval such that 
\[
F_t : \Rn \to \Rn
\]
is a smooth diffeomorphism for each $t \in T$ and $F_t(x)$ is jointly smooth as a function of $(t,x) \in T \times \Rn$. Thus $F_t$ describes a flow on $\Rn$. 

Fix a compact set $K \subset \Rn, n \geq 2$, and a Borel measure $\mu$ on $K$ with $0 < \mu(K) < \infty$. Define the $\Phi$-energy of $\mu$ to be the function 
\[
E_\mu(t) = \int_K \int_K \Phi(|F_t(x)-F_t(y)|) \, d\mu(x) d\mu(y) , \qquad t \in T,
\]
so that $-\infty < E_\mu(t) \leq +\infty$. The next four lemmas prove finiteness, continuity, differentiability, and twice differentiability of $E_\mu$ with respect to $t$. The fifth lemma examines concavity. 
\begin{lemma}[Finiteness of $E_\mu$ somewhere implies finiteness everywhere] \label{le:claim0}
If the kernel $\Phi$ satisfies \eqref{Phiassumption1} and $E_\mu(t)<\infty$ for some $t \in T$, then $E_\mu(t)<\infty$ for all $t \in T$.
\end{lemma}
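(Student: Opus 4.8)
The plan is to compare the two double integrals $E_\mu(s)$ and $E_\mu(t)$ for arbitrary $s,t \in T$ and show that finiteness of one forces finiteness of the other. The key is that the quantity $\Phi$ can only drive the integral to $+\infty$ through the diagonal, where $|F_t(x)-F_t(y)| \to 0$; away from the diagonal the integrand is bounded and integrable since $\mu(K) < \infty$. So the whole difficulty is controlling the near-diagonal behavior, and assumption \eqref{Phiassumption1} is precisely the tool for that.

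\medskip\noindent\textbf{Main steps.} First I would fix $s, t \in T$ with $E_\mu(s) < \infty$, and write
\[
|F_t(x) - F_t(y)| = a(x,y) \, |F_s(x) - F_s(y)|, \qquad a(x,y) = \frac{|F_t(x)-F_t(y)|}{|F_s(x)-F_s(y)|},
\]
for $x \neq y$ in $K$. Since $F_s$ and $F_t$ are smooth diffeomorphisms on the compact set $K$, the mean value theorem (or the smoothness of $F_t \circ F_s^{-1}$ together with compactness) gives two-sided bounds $0 < c \leq a(x,y) \leq C'$ on $K \times K$ away from the diagonal, and in fact uniformly: the ratio of chord lengths of two diffeomorphic images of the same compact set is bounded above and below by constants depending on the $C^1$ norms of $F_s, F_t$ and their inverses. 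The aim is to arrange that $a(x,y)$ lands in the range $[1/2, 2]$ where \eqref{Phiassumption1} applies, after possibly subdividing the interval $[s,t]$.

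\medskip\noindent\textbf{The subdivision device.} Because the uniform ratio bounds $[c, C']$ need not lie inside $[1/2, 2]$, I would first reduce to the case where $s$ and $t$ are close. By continuity of $(t,x) \mapsto F_t(x)$ and compactness of $K$, for each $t_0 \in T$ there is a neighborhood on which the ratio $|F_t(x)-F_t(y)| / |F_{t_0}(x)-F_{t_0}(y)|$ stays in $[1/2,2]$ uniformly in $x \neq y$; covering the compact segment joining $s$ to $t$ in $T$ by finitely many such neighborhoods lets me chain the implication across finitely many steps. On each small step, writing $r = |F_s(x)-F_s(y)|$ and $a = a(x,y) \in [1/2,2]$, assumption \eqref{Phiassumption1} gives the pointwise bound
\[
|\Phi(|F_t(x)-F_t(y)|)| = |\Phi(ar)| \leq C\bigl(1 + |\Phi(r)|\bigr) = C\bigl(1 + |\Phi(|F_s(x)-F_s(y)|)|\bigr).
\]
Integrating $d\mu(x)\,d\mu(y)$ over $K \times K$ and using $\mu(K) < \infty$ together with $E_\mu(s) < \infty$ (which, combined with the lower bound $E_\mu(s) > -\infty$ coming from boundedness of $K$, means $\int\int |\Phi(|F_s(x)-F_s(y)|)| \, d\mu\,d\mu < \infty$) yields $E_\mu(t) < \infty$ on that step. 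Chaining the finitely many steps propagates finiteness from $s$ to $t$.

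\medskip\noindent\textbf{Main obstacle.} The delicate point is the passage from the pointwise two-sided chord-ratio bound to the uniform bound $a(x,y) \in [1/2,2]$ on a neighborhood of $t_0$, valid as $(x,y)$ approaches the diagonal. Near the diagonal the ratio is governed by $|DF_t(x)|$ versus $|DF_{t_0}(x)|$ acting on the unit direction $(x-y)/|x-y|$, so the uniformity follows from joint continuity of the derivative $D_x F_t$ in $(t,x)$ and compactness of $K$ times the unit sphere; making this rigorous, rather than merely invoking the mean value theorem, is the one place requiring care. Everything else is a routine application of \eqref{Phiassumption1} and the finiteness of $\mu(K)$.
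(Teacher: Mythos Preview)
Your proposal is correct and follows essentially the same route as the paper: both arguments rest on the local distortion comparison $\tfrac{1}{2}\le |F_{t_1}(x)-F_{t_1}(y)|/|F_{t_2}(x)-F_{t_2}(y)|\le 2$ for nearby parameter values, then invoke \eqref{Phiassumption1} to bound one integrand by the other and integrate. The only cosmetic difference is that the paper globalizes via an open--closed connectedness argument on $T$ (the set $\{t:E_\mu(t)<\infty\}$ is both open and closed), whereas you globalize by covering the segment $[s,t]$ with finitely many good neighborhoods and chaining; these are equivalent devices, and your handling of the absolute value issue and the near-diagonal ratio bound matches the paper's.
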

\begin{proof}
Fix $\tau \in T$. The distortions of the diffeomorphisms $F_{t_1}$ and $F_{t_2}$ are comparable when $t_1$ and $t_2$ are sufficiently close to $\tau$, meaning that
\[
\frac{1}{2} \leq \frac{|F_{t_1}(x)-F_{t_1}(y)|}{|F_{t_2}(x)-F_{t_2}(y)|} \leq 2 , \qquad x,y \in K , \ x \neq y .
\]
Hence hypothesis \eqref{Phiassumption1} implies  
\[
|\Phi(|F_{t_1}(x)-F_{t_1}(y)|)| \leq C + C \, |\Phi(|F_{t_2}(x)-F_{t_2}(y)|)|  .
\]
The absolute values can be dropped on the right side provided the first constant $C$ is increased suitably, since $\Phi(r)$ is bounded below when $r=|F_{t_2}(x)-F_{t_2}(y)|$ is bounded, which it certainly is for $x$ and $y$ belonging to the compact set $K$. Hence 
\begin{equation} \label{eq:Phiestimate}
|\Phi(|F_{t_1}(x)-F_{t_1}(y)|)| \leq A + B \, \Phi(|F_{t_2}(x)-F_{t_2}(y)|) 
\end{equation}
whenever $x,y \in K$ and $t_1$ and $t_2$ are close to $\tau \in T$. The constants $A$ and $B$ depend on $K$ and $\tau$. 

By integrating the last inequality with respect to $d\mu(x)d\mu(y)$ we see that if $E_\mu(t_2)$ is finite then so is $E_\mu(t_1)$, and that if $E_\mu(t_1)$ is infinite then so is $E_\mu(t_2)$. In particular, if $E_\mu(\tau)<\infty$ then $E_\mu(t)<\infty$ for all $t$ near $\tau$, while if $E_\mu(\tau)=\infty$ then $E_\mu(t)=\infty$ for all $t$ near $\tau$. 

Since $\tau$ was arbitrary, we deduce the set $\{ t \in T : E_\mu(t) < \infty \}$ is both open and closed, and so it is either empty or else equals the full interval $T$. The lemma follows. 
\end{proof}
\begin{lemma}[Continuity of $E_\mu$] \label{le:claim1}
If the kernel $\Phi$ satisfies \eqref{Phiassumption1} and $E_\mu(t)<\infty$ for some $t \in T$, then $E_\mu(t)$ is finite valued and continuous on the interval $T$.
\end{lemma}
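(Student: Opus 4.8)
The plan is to deduce continuity from the dominated convergence theorem, with the dominating function handed to us almost for free by the estimate \eqref{eq:Phiestimate} established during the proof of \autoref{le:claim0}. That lemma already guarantees $E_\mu(t) < \infty$ for every $t \in T$, which disposes of the ``finite valued'' claim; it also forces the diagonal $\{(x,x) : x \in K\}$ to be $(\mu \times \mu)$-null, since the integrand equals $\Phi(0) = +\infty$ there and $E_\mu$ is finite.

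To prove continuity at an arbitrary $\tau \in T$, I would fix a sequence $t_j \to \tau$ and write $g_t(x,y) = \Phi(|F_t(x) - F_t(y)|)$. For the pointwise convergence, off the diagonal --- hence $(\mu \times \mu)$-almost everywhere --- the joint smoothness of $F_t(x)$ gives $|F_{t_j}(x) - F_{t_j}(y)| \to |F_\tau(x) - F_\tau(y)|$, and because $F_\tau$ is a diffeomorphism this limit is strictly positive; continuity of $\Phi$ on $(0,\infty)$ then yields $g_{t_j}(x,y) \to g_\tau(x,y)$. For the domination, I would apply \eqref{eq:Phiestimate} with $t_1 = t_j$ and $t_2 = \tau$: once $j$ is large enough that $t_j$ lies in the neighborhood of $\tau$ on which that estimate holds, $|g_{t_j}(x,y)| \leq A + B \, g_\tau(x,y)$, and the right-hand side is a fixed integrable majorant because $\int_K \int_K (A + B \, g_\tau) \, d\mu \, d\mu = A \, \mu(K)^2 + B \, E_\mu(\tau) < \infty$. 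Dominated convergence then gives $E_\mu(t_j) \to E_\mu(\tau)$; as the sequence and then $\tau$ were arbitrary, $E_\mu$ is continuous on $T$.

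There is no genuinely hard step here: the whole argument rests on reusing \eqref{eq:Phiestimate}. The only points needing care are that the majorant $A + B \, g_\tau$ be independent of $j$ --- which is exactly what the uniformity of \eqref{eq:Phiestimate} over $t_1$ near $\tau$ provides --- and that the pointwise limit hold almost everywhere, which in turn relies on the diagonal being null, itself a consequence of the finiteness of the energy. Thus the main ``obstacle'' is really bookkeeping: making sure each ingredient (finiteness, the null diagonal, and the uniform majorant) is in place before invoking the convergence theorem.
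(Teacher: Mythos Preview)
Your argument is correct and follows essentially the same route as the paper: invoke \autoref{le:claim0} for finiteness, use the estimate \eqref{eq:Phiestimate} with $t_2=\tau$ fixed to obtain an integrable dominator independent of $t_1$, and then apply dominated convergence. You have simply spelled out a bit more carefully the pointwise convergence off the diagonal and the nullity of the diagonal, points the paper leaves implicit.
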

\begin{proof}
The finiteness of $E_\mu$ was shown in \autoref{le:claim0}. Take $t_2=\tau \in T$. The right side of \eqref{eq:Phiestimate} is integrable with respect to $d\mu(x)d\mu(y)$, since $E_\mu(t_2)<\infty$, and so it provides an integrable dominator for the integral defining $E_\mu(t_1)$. Dominated convergence therefore implies $E_\mu(t_1) \to E_\mu(t_2)$ as $t_1 \to t_2$, which proves the desired continuity of $E_\mu$. 
\end{proof}
\begin{lemma}[Differentiability of $E_\mu$] \label{le:claim2}
If the kernel $\Phi$ satisfies \eqref{Phiassumption1} and \eqref{Phiassumption2}, and $E_\mu(t)<\infty$ for some $t \in T$, then $E_\mu(t)$ is finite valued and differentiable on $T$, and differentiation through the integral holds:
\[
E_\mu^\prime(t) = \int_K \int_K \left( \frac{\partial\ }{\partial t} \, \Phi(|F_t(x)-F_t(y)|) \right) d\mu(x) d\mu(y) .
\]
\end{lemma}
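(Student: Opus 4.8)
The plan is to prove differentiability locally at each $\tau \in T$ by justifying differentiation under the integral sign through a dominated-convergence argument, the essential point being to produce a dominator for the $t$-derivative of the integrand that is integrable against $d\mu\,d\mu$. First I would record that \autoref{le:claim0} and \autoref{le:claim1} already guarantee $E_\mu$ is finite valued on all of $T$, so it suffices to prove differentiability near an arbitrary fixed $\tau \in T$ and to exhibit the stated formula there.

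Next I would compute the pointwise $t$-derivative of the integrand by the chain rule. Writing $r = |F_t(x) - F_t(y)|$, we have
\[
\frac{\partial}{\partial t} \Phi(|F_t(x)-F_t(y)|) = \Phi'(r) \, \frac{(F_t(x)-F_t(y))\cdot(\partial_t F_t(x)-\partial_t F_t(y))}{r} ,
\]
whose absolute value is at most $|\Phi'(r)| \, |\partial_t F_t(x) - \partial_t F_t(y)|$ by Cauchy--Schwarz. The crucial geometric input is that $\partial_t F_t$ is Lipschitz in $x$, uniformly for $(t,x)$ in a compact neighborhood of $(\tau,K)$ by joint smoothness, while $F_t$ is bi-Lipschitz on $K$ (being a smooth diffeomorphism with nonvanishing Jacobian on the compact set). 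Hence $|\partial_t F_t(x) - \partial_t F_t(y)| \leq L|x-y| \leq L' r$ for $t$ near $\tau$, so that the $t$-derivative of the distance is comparable to the distance itself and the factor $r$ in the denominator is absorbed.

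I would then invoke the comparability of distortions established in \autoref{le:claim0}, namely $\tfrac12 \leq r/|F_\tau(x)-F_\tau(y)| \leq 2$ for $t$ close to $\tau$, to write $r = a\,|F_\tau(x)-F_\tau(y)|$ with $1/2 \leq a \leq 2$. Assumption \eqref{Phiassumption2} then bounds $|\Phi'(r)|$ by $C\,|F_\tau(x)-F_\tau(y)|^{-1}\bigl(1+|\Phi(|F_\tau(x)-F_\tau(y)|)|\bigr)$, and combining this with the estimate $r \leq 2|F_\tau(x)-F_\tau(y)|$ from the previous step cancels the reciprocal and yields a bound of the form
\[
\left| \frac{\partial}{\partial t} \Phi(|F_t(x)-F_t(y)|) \right| \leq A + B\,\Phi(|F_\tau(x)-F_\tau(y)|)
\]
for all $t$ near $\tau$, with constants depending only on $K$ and $\tau$ (again dropping absolute values since $\Phi$ is bounded below on the bounded range of distances over $K$). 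This right side is precisely the integrable dominator appearing in \eqref{eq:Phiestimate}, integrable against $d\mu\,d\mu$ because $E_\mu(\tau) < \infty$.

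Finally I would apply the standard theorem on differentiation through the integral: since the integrand is differentiable in $t$ for each fixed $(x,y)$ with $x \neq y$, and its $t$-derivative is dominated, uniformly for $t$ in a neighborhood of $\tau$, by a fixed $\mu\times\mu$-integrable function, the mean value theorem converts the difference quotients into values of the dominated derivative, so dominated convergence gives differentiability of $E_\mu$ at $\tau$ together with the asserted formula. The main obstacle is precisely matching the singularity: assumption \eqref{Phiassumption2} controls $\Phi'$ only up to a factor $r^{-1}$, so the argument hinges on showing that $\partial_t|F_t(x)-F_t(y)|$ vanishes at the same linear rate in $r$, which is where the two-sided bi-Lipschitz bounds for the diffeomorphism and the Lipschitz dependence of $\partial_t F_t$ on $x$ must be deployed carefully.
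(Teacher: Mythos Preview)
Your proposal is correct and follows essentially the same approach as the paper: compute the $t$-derivative of the integrand by the chain rule, absorb the $r^{-1}$ from assumption \eqref{Phiassumption2} using the bi-Lipschitz/Lipschitz bounds on $F_t$ and $\dot{F}_t$, and then produce the integrable dominator $A + B\,\Phi(|F_\tau(x)-F_\tau(y)|)$ via the distortion comparison. The only cosmetic difference is that you apply \eqref{Phiassumption2} directly with $a$ equal to the distortion ratio $r/|F_\tau(x)-F_\tau(y)|$, whereas the paper applies it with $a=1$ and then invokes \eqref{eq:Phiestimate} separately; both routes land on the same dominator.
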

\begin{proof}
Finiteness of the integral defining $E_\mu(t)$ (shown in \autoref{le:claim0}) implies that the set where the integrand $\Phi(|F_t(x)-F_t(y)|)$ equals $\infty$ has $(\mu \times \mu)$-measure equal to zero. That set is exactly the spatial diagonal $\{ (x,y) : x=y \}$, and so we may assume in what follows that $x \neq y$. 

To prove the lemma, one wants to differentiate through the integral using a standard argument with difference quotients and dominated convergence. To justify this step it suffices to take $\tau \in T$ and demonstrate an integrable dominator for the $t$-derivative of the integrand such that the dominator is independent of $t$ near $\tau$. 

Start by computing the derivative directly as 
\begin{align}
& \frac{\partial\ }{\partial t} \, \Phi(|F_t(x)-F_t(y)|) \notag \\
& = \Phi^\prime(|F_t(x)-F_t(y)|) \frac{F_t(x)-F_t(y)}{|F_t(x)-F_t(y)|} \cdot (\dot{F}_t(x)-\dot{F}_t(y)) , \label{firstderivPhi}
\end{align}
where the dot indicates a $t$-derivative. Hence when $t$ is close to $\tau$ we have 
\begin{align*}
& \left| \frac{\partial\ }{\partial t} \, \Phi(|F_t(x)-F_t(y)|) \right| \notag \\
& \leq C \big( 1+|\Phi(|F_t(x)-F_t(y)|) | \big) \, |F_t(x)-F_t(y)|^{-1} \left| \dot{F}_t(x)-\dot{F}_t(y) \right| \quad \ \text{by \eqref{Phiassumption2} with $a=1$} \notag \\
& \leq C \big( 1+|\Phi(|F_t(x)-F_t(y)|)| \big) \qquad \text{since $\dot{F}_t(x)$ is smooth as a function of $(t,x)$} \notag \\
& \leq A + B \, \Phi(|F_{\tau}(x)-F_{\tau}(y)|) 
\end{align*}
by \eqref{eq:Phiestimate}. The last line is independent of $t$, and provides a dominator with respect to $d\mu(x) d\mu(y)$ since $E_\mu(\tau)<\infty$. Thus the lemma is proved. 
\end{proof}
\begin{lemma}[Twice differentiability of $E_\mu$] \label{le:claim3}
If the kernel $\Phi$ satisfies \eqref{Phiassumption1}, \eqref{Phiassumption2}, \eqref{Phiassumption3}, and $E_\mu(t)<\infty$ for some $t \in T$, then $E_\mu(t)$ is twice differentiable on $T$ and its second derivative is found by differentiation through the integral:
\[
E_\mu^{\prime \prime}(t) = \int_K \int_K \left( \frac{\partial^2\ }{\partial t^2} \, \Phi(|F_t(x)-F_t(y)|) \right) d\mu(x) d\mu(y) .
\]
\end{lemma}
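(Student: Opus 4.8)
The plan is to mimic the proof of \autoref{le:claim2} one level higher: starting from the first-derivative formula $E_\mu^\prime(t) = \int_K \int_K \partial_t \Phi(|F_t(x)-F_t(y)|) \, d\mu(x) d\mu(y)$ already established there, I would differentiate a second time through the integral by a difference-quotient argument. The entire task then reduces to exhibiting an integrable dominator for $\partial_t^2 \Phi(|F_t(x)-F_t(y)|)$ that is independent of $t$ for $t$ near a fixed $\tau \in T$.

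First I would compute the second derivative by the chain rule. Writing $w = F_t(x)-F_t(y)$ and $r = |w|$, so that $\dot r = (w \cdot \dot w)/r$, the first derivative in \eqref{firstderivPhi} is $\Phi^\prime(r)\, \dot r$, and hence
\[
\frac{\partial^2\ }{\partial t^2} \, \Phi(r) = \Phi^{\prime\prime}(r)\, \dot r^2 + \Phi^\prime(r)\, \ddot r .
\]
The singular factors $|\Phi^{\prime\prime}(r)| \le Cr^{-2}(1+|\Phi(r)|)$ and $|\Phi^\prime(r)| \le Cr^{-1}(1+|\Phi(r)|)$ coming from \eqref{Phiassumption3} and \eqref{Phiassumption2} must be absorbed, so the crux is to show that $\dot r$ and $\ddot r$ vanish to the correct order as $r \to 0$: precisely, that $|\dot r| \le C r$ and $|\ddot r| \le C r$ for $t$ near $\tau$.

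The key geometric input is a lower bound $|x-y| \le C^\prime r$ holding uniformly for $t$ near $\tau$ and $x,y \in K$, which follows from the Lipschitz continuity of the inverse diffeomorphism $F_t^{-1}$ on the compact set, uniform in $t$ by joint smoothness. Since $\dot F_t$ and $\ddot F_t$ are Lipschitz in the spatial variable, this gives $|\dot w| = |\dot F_t(x)-\dot F_t(y)| \le L|x-y| \le LC^\prime r$ and likewise $|\ddot w| \le LC^\prime r$. Cauchy--Schwarz then yields $|\dot r| \le |\dot w| \le LC^\prime r$, while the explicit expression $\ddot r = (|\dot w|^2 + w \cdot \ddot w)/r - (w \cdot \dot w)^2/r^3$ gives $|\ddot r| \le 2|\dot w|^2/r + |\ddot w| \le C r$ after substituting $|\dot w| \le LC^\prime r$. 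Feeding these bounds into the chain-rule expression cancels the $r^{-2}$ and $r^{-1}$ factors and produces $|\partial_t^2 \Phi(|F_t(x)-F_t(y)|)| \le C_1 \big( 1+|\Phi(|F_t(x)-F_t(y)|)| \big)$.

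Finally I would invoke \eqref{eq:Phiestimate} to bound this by $A + B\, \Phi(|F_\tau(x)-F_\tau(y)|)$, which is $(\mu \times \mu)$-integrable because $E_\mu(\tau) < \infty$, finiteness everywhere having been recorded in \autoref{le:claim0}. This $t$-independent dominator justifies passing the second derivative through the integral by dominated convergence, yielding both twice differentiability and the stated formula. The main obstacle is the estimate on $\ddot r$: the term $|\dot w|^2/r$ appears genuinely singular and is tamed only by the lower comparability of $r$ with $|x-y|$, so the diffeomorphism hypothesis, and not merely smoothness of $F_t$, is doing essential work here.
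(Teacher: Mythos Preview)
Your proposal is correct and follows essentially the same approach as the paper: find a $t$-independent integrable dominator for $\partial_t^2 \Phi(|F_t(x)-F_t(y)|)$, then invoke dominated convergence. Your chain-rule expression $\Phi''(r)\dot r^2 + \Phi'(r)\ddot r$ is algebraically the same as the paper's expanded formula \eqref{eq:gformula}, and both arguments terminate by reducing to \eqref{eq:Phiestimate}. If anything, you are more explicit than the paper about the key geometric step: the paper compresses the bound $r^{-2}\big(|\dot w|^2 + |w||\ddot w|\big) \le C$ into the phrase ``since $F_t, \dot F_t, \ddot F_t$ are smooth,'' whereas you spell out that one needs the lower comparability $|x-y|\le C' r$ coming from the Lipschitz inverse, and hence that the diffeomorphism hypothesis (not mere smoothness of $F_t$) is essential.
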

\begin{proof}
Arguing as in the proof of \autoref{le:claim2}, the task is to find an integrable dominator for the second $t$-derivative of the integrand such that the dominator is independent of $t$ near $\tau$. Direct computation reveals the second derivative to be 
\begin{align}
& \frac{\partial^2\ }{\partial t^2} \, \Phi(|F_t(x)-F_t(y)|) \notag \\
& = \left( \Phi^{\prime\prime}(r)-\frac{\Phi^\prime(r)}{r} \right) \left( \frac{F_t(x)-F_t(y)}{|F_t(x)-F_t(y)|} \cdot (\dot{F}_t(x)-\dot{F}_t(y)) \right)^{\! 2} \label{eq:gformula} \\
& \qquad + \frac{\Phi^\prime(r)}{r} \left( |\dot{F}_t(x)-\dot{F}_t(y)|^2 + (F_t(x)-F_t(y)) \cdot (\ddot{F}_t(x)-\ddot{F}_t(y)) \right) , \notag 
\end{align}
where for brevity we have written $r=|F_t(x)-F_t(y)|$. Hence when $t$ is close to $\tau$, 
\[
\begin{split}
& \left| \frac{\partial^2\ }{\partial t^2} \, \Phi(|F_t(x)-F_t(y)|) \right| \\
& \leq C (1+|\Phi(r)|) \, r^{-2} \left( |\dot{F}_{t}(x)-\dot{F}_{t}(y)|^2 + \lvert F_{t}(x)-F_{t}(y) \rvert \lvert \ddot{F}_{t}(x)-\ddot{F}_{t}(y) \rvert \right) 
\end{split}
\]
by using hypotheses \eqref{Phiassumption2} and \eqref{Phiassumption3} with $a=1$. Since $F_t(x), \dot{F}_t(x), \ddot{F}_t(x)$ are smooth as functions of $(t,x)$, we deduce for $x,y \in K$ that
\[
\left| \frac{\partial^2\ }{\partial t^2} \, \Phi(|F_t(x)-F_t(y)|) \right| \leq C (1+|\Phi(r)|) \leq A + B \Phi(\rho) 
\]
where $\rho=|F_\tau(x)-F_\tau(y)|$ and the second inequality relies on the distortion estimate \eqref{eq:Phiestimate}, which in turn relies on hypothesis \eqref{Phiassumption1}.

The last estimate is independent of $t$, and provides a dominator with respect to $d\mu(x) d\mu(y)$ since $E_\mu(\tau)<\infty$, completing the proof of the lemma. 
\end{proof}
\begin{lemma}[Concavity of $E_\mu$] \label{le:claim4}
Suppose the kernel $\Phi$ satisfies conditions \eqref{Phiassumption1}, \eqref{Phiassumption2}, \eqref{Phiassumption3}, and that $E_\mu(t)<\infty$ for some $t \in T$. 

If $\Phi^{\prime \prime}(r) \geq r^{-1} \Phi^\prime(r)$ for $r>0$, and 
\begin{equation} \label{bigcondition}
\begin{split}
0 & \geq \Phi^{\prime \prime}(|F_t(x)-F_t(y)|) \, |\dot{F}_t(x)-\dot{F}_t(y)|^2 \\
& \quad + \frac{\Phi^\prime(|F_t(x)-F_t(y)|)}{|F_t(x)-F_t(y)|} (F_t(x)-F_t(y)) \cdot (\ddot{F}_t(x)-\ddot{F}_t(y)) 
\end{split}
\end{equation}
for all $x,y \in K, x \neq y$ and all $t \in T$, then $E_\mu(t)$ is concave, with second derivative $E_\mu^{\prime\prime}(t) \leq 0$ for each $t \in T$. The second derivative is negative at $t$ if in addition: 
\begin{enumerate}
\item[(i)] inequality \eqref{bigcondition} is strict, or else 
\item[(ii)] $\Phi^{\prime\prime}(r) > r^{-1} \Phi^\prime(r)$ for all $r>0$, and the set 
\[
\{ (x,y) \in K \times K : \text{$F_t(x)-F_t(y)$ and $\dot{F}_t(x)-\dot{F}_t(y)$ are linearly independent} \}
\]
has positive $(\mu \times \mu)$-measure. 
\end{enumerate}
\end{lemma}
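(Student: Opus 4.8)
\emph{Proof proposal.} The plan is to differentiate twice through the integral by invoking \autoref{le:claim3}, which applies because all three kernel conditions \eqref{Phiassumption1}--\eqref{Phiassumption3} hold and $E_\mu$ is finite somewhere. This gives
\[
E_\mu^{\prime\prime}(t) = \int_K \int_K \frac{\partial^2\ }{\partial t^2} \, \Phi(|F_t(x)-F_t(y)|) \, d\mu(x) d\mu(y) .
\]
Since the integrand equals $+\infty$ only on the diagonal $\{ x=y \}$, which carries zero $(\mu \times \mu)$-measure (as already noted in the finiteness argument), it suffices to show the second-derivative integrand is $\leq 0$ for every pair $x \neq y$, and then to track where strictness occurs.

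To do this I would start from the explicit formula \eqref{eq:gformula}. Writing $r=|F_t(x)-F_t(y)|$ and abbreviating $w=F_t(x)-F_t(y)$, $v=\dot{F}_t(x)-\dot{F}_t(y)$, $a=\ddot{F}_t(x)-\ddot{F}_t(y)$, the integrand is
\[
G = \Bigl( \Phi^{\prime\prime}(r)-\tfrac{\Phi^\prime(r)}{r} \Bigr) \Bigl( \tfrac{w \cdot v}{r} \Bigr)^{\! 2} + \tfrac{\Phi^\prime(r)}{r} \bigl( |v|^2 + w \cdot a \bigr) .
\]
The key algebraic step is to subtract the right-hand side of \eqref{bigcondition}, namely $B = \Phi^{\prime\prime}(r)|v|^2 + \tfrac{\Phi^\prime(r)}{r}(w \cdot a)$, and observe the clean cancellation
\[
G = B + \Bigl( \Phi^{\prime\prime}(r)-\tfrac{\Phi^\prime(r)}{r} \Bigr) \Bigl( \bigl( \tfrac{w \cdot v}{r} \bigr)^2 - |v|^2 \Bigr) .
\]
Both summands are then manifestly nonpositive: the first because $B \leq 0$ is exactly hypothesis \eqref{bigcondition}, and the second because $\Phi^{\prime\prime}(r)-r^{-1}\Phi^\prime(r) \geq 0$ by the standing assumption while $(w \cdot v)^2/r^2 \leq |v|^2$ by Cauchy--Schwarz. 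Hence $G \leq 0$ for all $x \neq y$, and integrating yields $E_\mu^{\prime\prime}(t) \leq 0$, so $E_\mu$ is concave.

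For strict negativity I would simply record where the two nonpositive summands fail to vanish. Under hypothesis (i), strictness of \eqref{bigcondition} gives $B<0$, whence $G \leq B < 0$ for all $x \neq y$ and the integral is strictly negative. Under hypothesis (ii), whenever $w$ and $v$ are linearly independent the Cauchy--Schwarz inequality is strict, so $(w \cdot v)^2/r^2 - |v|^2 < 0$, and combined with $\Phi^{\prime\prime}(r) > r^{-1}\Phi^\prime(r)$ this makes the second summand strictly negative; since that happens on a set of positive $(\mu \times \mu)$-measure while $G \leq 0$ everywhere, again $E_\mu^{\prime\prime}(t) < 0$.

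I do not anticipate a serious obstacle here: the entire lemma rests on spotting the single identity $G = B + (\Phi^{\prime\prime}-r^{-1}\Phi^\prime)\bigl((w \cdot v/r)^2 - |v|^2\bigr)$, after which the sign analysis is immediate from the two hypotheses plus Cauchy--Schwarz. The only point requiring a little care is the reduction to $x \neq y$ through the measure-zero diagonal, but this is inherited directly from \autoref{le:claim3} and the finiteness of $E_\mu$.
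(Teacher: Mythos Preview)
Your proof is correct and follows essentially the same route as the paper: both arguments start from \eqref{eq:gformula}, use the Cauchy--Schwarz inequality $(w\cdot v/r)^2 \leq |v|^2$ together with the sign of $\Phi^{\prime\prime}(r)-r^{-1}\Phi^\prime(r)$ to bound the integrand above by the right side of \eqref{bigcondition}, and then handle the strictness cases (i) and (ii) identically. The only cosmetic difference is that you package the step as the algebraic identity $G = B + (\Phi^{\prime\prime}-r^{-1}\Phi^\prime)\bigl((w\cdot v/r)^2 - |v|^2\bigr)$, whereas the paper writes it as an inequality chain $G \leq B \leq 0$.
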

\begin{proof}
The Schwarz inequality gives that 
\begin{equation} \label{eq:schwarz}
\left( \frac{F_t(x)-F_t(y)}{|F_t(x)-F_t(y)|} \cdot (\dot{F}_t(x)-\dot{F}_t(y)) \right)^{\! 2} \leq |\dot{F}_t(x)-\dot{F}_t(y)|^2 
\end{equation}
when $x \neq y$, with strict inequality when the vectors $F_t(x)-F_t(y)$ and $\dot{F}_t(x)-\dot{F}_t(y)$ are linearly independent. After inserting this inequality into the formula \eqref{eq:gformula}, and using the assumption that $\Phi^{\prime\prime}(r)-\Phi^\prime(r)/r \geq 0$, we can simplify to obtain that  
\begin{align}
& \frac{\partial^2\ }{\partial t^2} \, \Phi(|F_t(x)-F_t(y)|) \notag \\
& \leq \Phi^{\prime\prime}(r) |\dot{F}_t(x)-\dot{F}_t(y)|^2 + \frac{\Phi^\prime(r)}{r} (F_t(x)-F_t(y)) \cdot (\ddot{F}_t(x)-\ddot{F}_t(y)) \label{eq:schwarzstrict} \\
& \leq 0 \notag 
\end{align}
by hypothesis \eqref{bigcondition}. Thus the map $t \mapsto \Phi(|F_t(x)-F_t(y)|)$ is concave, for each pair of points $x \neq y$. We need not consider $x=y$, since the spatial diagonal set has $(\mu \times \mu)$-measure zero, due to finiteness of $E_\mu(t)$. Concavity of $E_\mu(t)$ now follows from \autoref{le:claim3}.

Strictness of the concavity is immediate by the argument above when condition (i) holds, that is, when hypothesis \eqref{bigcondition} holds with strict inequality for all $x \neq y$. 

Now suppose condition (ii) holds, so that $\Phi^{\prime\prime}(r) - r^{-1} \Phi^\prime(r)>0$, and the Schwarz inequality \eqref{eq:schwarz} holds with strict inequality for $(x,y)$ in some set of positive $(\mu \times \mu)$-measure. On that set, inequality \eqref{eq:schwarzstrict} holds with strict inequality, and so \autoref{le:claim3} yields negativity of $E_\mu^{\prime\prime}(t)$. 
\end{proof}

\section{\bf Linear maps, for a fixed measure}
\label{sec:logvariation}

Now we specialize to diagonal linear diffeomorphisms and prove that the logarithmic and Riesz energies are concave with respect to the variation parameter $t$. Further, the energy has a critical point at $t=0$ if the measure $\mu$ is invariant under an irreducible group of isometries. 

Throughout the section, $K$ is a compact set in $\Rn, n \geq 2$. Fix numbers 
\[
\sigma_1,\dots,\sigma_n>0 .
\] 

For the logarithmic case ($p=0$), assume  
\begin{equation} \label{singprod}
\sigma_1 \cdots \sigma_n = 1 ,
\end{equation}
and define a $1$-parameter family of diagonal matrices by
\begin{equation} \label{Ttlogarithmic}
S_t = 
\begin{pmatrix}
\sigma_1^t & 0 & \dots & 0 \\ 0 & \sigma_2^t & \dots & 0 \\ \vdots & \vdots & \ddots & \vdots \\ 0 & 0 & \dots & \sigma_n^t
\end{pmatrix}
\end{equation}
for $t \in \R$. Notice $S_0$ is the identity and $S_1$ is a diagonal matrix with the $\sigma_k$ on the diagonal. Each $S_t$ is a volume-preserving stretch of $n$-dimensional space, since $\det S_t = 1$. Its derivative at $t=0$ is 
\[
\dot{S}_0 = 
\begin{pmatrix}
\log \sigma_1 & 0 & \dots & 0 \\ 0 & \log \sigma_2 & \dots & 0 \\ \vdots & \vdots & \ddots & \vdots \\ 0 & 0 & \dots & \log \sigma_n
\end{pmatrix} .
\]
Clearly $\dot{S}_0$ has trace zero since $\sigma_1 \cdots \sigma_n = 1$.  The choice of $S_t$ will be motivated at the end of the section.  

For the Riesz case ($0<p<n$), assume 
\begin{equation} \label{sigmacond}
\frac{1}{n} ( \sigma_1^{-p} + \dots + \sigma_n^{-p} ) = 1 ,
\end{equation}
and define 
\[
S_t =
\begin{pmatrix}
(1-t+t\sigma_1^{-p})^{-1/p} & 0 & \dots & 0 \\ 0 & (1-t+t\sigma_2^{-p})^{-1/p} & \dots & 0 \\ \vdots & \vdots & \ddots & \vdots \\ 0 & 0 & \dots & (1-t+t\sigma_n^{-p})^{-1/p}
\end{pmatrix}
\]
for $t \in [0,1]$. The definition continues to be valid for $t$ in a slightly larger open interval $T$ that contains $[0,1]$. Again $S_0$ is the identity and $S_1$ is a diagonal matrix with the $\sigma_k$ on the diagonal. The derivative at $t=0$ is 
\[
\dot{S}_0 = 
\frac{1}{p} 
\begin{pmatrix}
1-\sigma_1^{-p} & 0 & \dots & 0 \\ 0 & 1-\sigma_2^{-p}  & \dots & 0 \\ \vdots & \vdots & \ddots & \vdots \\ 0 & 0 & \dots & 1-\sigma_n^{-p} 
\end{pmatrix} .
\]
Observe $\dot{S}_0$ has trace zero by \eqref{sigmacond}. 

Consider the linear diffeomorphism $F_t(x)=S_t x$ on $\Rn$, and work from now on with the logarithmic and Riesz kernels  
\[
\Phi(r) = 
\begin{cases}
\log 1/r & \text{when $p=0$,} \\
1/r^p & \text{when $0<p<n$,}
\end{cases}
\]
which are known to satisfy assumptions \eqref{Phiassumption1}, \eqref{Phiassumption2}, \eqref{Phiassumption3}. We start by showing that the first derivative of the energy 
\[
E_\mu(t) = \int_K \int_K \Phi(|S_t(x)-S_t(y)|) \, d\mu(x) d\mu(y) 
\]
vanishes at $t=0$, when $\mu$ possesses sufficient symmetry. 
\begin{proposition}[First variation $=0$ for symmetric measures] \label{logcapfirstderivsymmetric} 
Assume $\mu$ is a Borel measure on $K$ with $0 < \mu(K) < \infty$, and $E_\mu(0)<\infty$. If $\mu$ is invariant under an irreducible, compact group of isometries then $E_\mu^\prime(0) = 0$.
\end{proposition}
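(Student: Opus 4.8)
The plan is to differentiate through the integral, evaluate the first variation as an integral of a quadratic form against $d\mu \, d\mu$, and then kill that integral by averaging the quadratic form over the symmetry group. Since the kernels $\Phi(r)=\log 1/r$ and $\Phi(r)=1/r^p$ satisfy \eqref{Phiassumption1} and \eqref{Phiassumption2}, and $E_\mu(0)<\infty$ by hypothesis, \autoref{le:claim2} lets me write
\[
E_\mu^\prime(0) = \int_K \int_K \left( \frac{\partial\ }{\partial t} \, \Phi(|S_t x - S_t y|) \right)_{\! t=0} d\mu(x) \, d\mu(y) .
\]
Here $F_t(x) = S_t x$, so $F_0(x) = x$ and $\dot{F}_0(x) = \dot{S}_0 x$. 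Inserting these into formula \eqref{firstderivPhi} and writing $D = \dot{S}_0$, which is a symmetric (diagonal) matrix with $\tr D = 0$, the first-variation integrand becomes the quadratic form
\[
\left( \frac{\partial\ }{\partial t} \, \Phi(|S_t x - S_t y|) \right)_{\! t=0} = \Phi^\prime(|x-y|) \, \frac{(x-y) \cdot D(x-y)}{|x-y|} .
\]

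Next I would exploit the group symmetry. Let $\mathcal{U}$ denote the irreducible, compact group of isometries leaving $\mu$ invariant. For each $U \in \mathcal{U}$, the substitution $x \mapsto Ux$, $y \mapsto Uy$ leaves the double integral unchanged, because $\mu$ is $U$-invariant and $U$ is an isometry so that $|Ux-Uy| = |x-y|$; meanwhile $U(x-y) \cdot D\, U(x-y) = (x-y) \cdot U^{\mathsf T} D U (x-y)$. Thus the displayed value of $E_\mu^\prime(0)$ is also obtained with $D$ replaced by $U^{\mathsf T} D U$, for every $U \in \mathcal{U}$. Averaging this identity over $\mathcal{U}$ against its normalized Haar measure $dU$ (available thanks to compactness), and interchanging the Haar average with the double integral, replaces $D$ by the group average $\bar{D} = \int_{\mathcal{U}} U^{\mathsf T} D U \, dU$:
\[
E_\mu^\prime(0) = \int_K \int_K \Phi^\prime(|x-y|) \, \frac{(x-y) \cdot \bar{D}(x-y)}{|x-y|} \, d\mu(x) \, d\mu(y) .
\]

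To finish, I would show $\bar{D}=0$. By right-invariance of Haar measure, $V^{\mathsf T} \bar{D} V = \bar{D}$ for all $V \in \mathcal{U}$, so $\bar{D}$ commutes with every element of $\mathcal{U}$. Being symmetric, $\bar{D}$ has real eigenvalues, and each eigenspace is $\mathcal{U}$-invariant; irreducibility forces each nonzero eigenspace to equal all of $\Rn$, so $\bar{D}$ is a scalar multiple of the identity. That scalar vanishes because $n\lambda = \tr \bar{D} = \tr D = 0$. Hence $\bar{D}=0$ and $E_\mu^\prime(0)=0$, as claimed. The one step needing care is the interchange of the Haar average with the double integral: I expect this to be the main technical point, but it is routine since the dominator $A + B\,\Phi(|x-y|)$ produced in the proof of \autoref{le:claim2} bounds the integrand uniformly in $U$ (the operator norm of $U^{\mathsf T} D U$ equals that of $D$), so Fubini applies. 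The genuinely substantive input is the representation-theoretic fact that a symmetric matrix commuting with an irreducible orthogonal group must be scalar.
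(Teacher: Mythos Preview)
Your proof is correct and follows essentially the same route as the paper: differentiate through the integral via \autoref{le:claim2}, rewrite the first variation as an integral of the quadratic form $(x-y)\cdot \dot{S}_0(x-y)$ against the kernel, exploit $\mu$-invariance to replace $\dot{S}_0$ by $U^\dagger \dot{S}_0 U$, average over Haar measure, and conclude that the averaged matrix vanishes because it commutes with the irreducible group and has trace zero. The paper packages that last step as a separate lemma (\autoref{averagingnD}), whereas you prove it inline, but the Schur-type argument is identical; your explicit justification of the Fubini interchange is a nice touch the paper leaves implicit.
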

\begin{proof}
The first derivative at $t=0$ can be evaluated by applying \autoref{le:claim2}, that is, by substituting the definition of the logarithmic or Riesz kernel $\Phi$ into formula \eqref{firstderivPhi}, obtaining that 
\begin{equation} \label{firstderivatzero}
E_\mu^\prime(0)
= - \beta_p \int \! \int \frac{(x-y) \cdot \dot{S}_0(x-y)}{|x-y|^{p+2}} \, d\mu(x) d\mu(y) 
\end{equation}
where we used that $S_0$ is the identity map and $\dot{S}_0$ is linear; the constant factor on the right side is 
\[
\beta_p = 
\begin{cases}
1 & \text{when $p=0$,} \\
p & \text{when $0<p<n$.}
\end{cases}
\] 
Write ${\mathcal U}$ for the irreducible, compact group of isometries under which $\mu$ is invariant, so that $\mu = \mu \circ U^{-1}$ for each $U \in {\mathcal U}$. Replacing $\mu$ with $\mu \circ U^{-1}$ in the right side of the derivative formula \eqref{firstderivatzero} yields that 
\[
E_\mu^\prime(0)
= - \beta_p \int \! \int \frac{(x-y)^\dagger U^\dagger \dot{S}_0 U(x-y)}{|x-y|^{p+2}} \, d\mu(x) d\mu(y) ,
\]
where $\dagger$ denotes the matrix transpose. The left side is independent of $U$. Integrating the right side over $U \in \mathcal{U}$ with respect to Haar measure gives $0$, by \autoref{averagingnD}, since $\dot{S}_0$ is real and symmetric with trace zero as we observed earlier in the section. 
\end{proof}

Next we show the logarithmic and Riesz energies are concave with respect to the variation parameter. 
\begin{proposition}[Second variation of the energy is $\leq 0$] \label{concavitypenergy} 
Assume $\mu$ is a Borel measure on $K$ with $0 < \mu(K) < \infty$. If $E_\mu(0)<\infty$ then $E_\mu^{\prime \prime}(t) \leq 0$ for all $t \in \R$ (when $p=0$) or all $0 \leq t \leq 1$ (when $0<p<n$). The strict inequality $E_\mu^{\prime \prime}(0) < 0$ holds at $t=0$ if in addition $\sigma_j \neq 1$ for some $j$ and $\mu$ has less than full measure on each hyperplane $\{ x_k = \text{const} \}$.
\end{proposition}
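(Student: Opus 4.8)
The plan is to deduce both the inequality $E_\mu^{\prime\prime}(t) \leq 0$ and its strict version from the general concavity criterion in \autoref{le:claim4}, applied to the diagonal flow $F_t(x) = S_t x$. Writing $s_k(t)$ for the $k$th diagonal entry of $S_t$ and setting $z = x - y$, one has $F_t(x) - F_t(y) = S_t z$, $\dot F_t(x) - \dot F_t(y) = \dot S_t z$, and $\ddot F_t(x) - \ddot F_t(y) = \ddot S_t z$, so the two ingredients of the criterion reduce to computations with the scalar functions $s_k$. First I would record the elementary kernel estimate $\Phi^{\prime\prime}(r) > r^{-1}\Phi^\prime(r)$: for the logarithmic kernel the left minus right side equals $2/r^2$, and for the Riesz kernel it equals $p(p+2)r^{-p-2}$, both strictly positive. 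This supplies the standing hypothesis of \autoref{le:claim4} as well as the strict inequality needed later for alternative (ii).

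The heart of the concavity statement is that the flows $S_t$ were chosen precisely so that condition \eqref{bigcondition} holds with equality. The key identity is $s_k(t)\ddot s_k(t) = \kappa\, \dot s_k(t)^2$ for every $k$ and $t$, where $\kappa = 1$ in the logarithmic case (immediate from $s_k(t) = \sigma_k^t$) and $\kappa = p+1$ in the Riesz case (a short computation from $s_k(t) = (1 - t + t\sigma_k^{-p})^{-1/p}$). Consequently $(S_t z) \cdot (\ddot S_t z) = \kappa\, |\dot S_t z|^2$, and with $R = |S_t z|$ the quantity on the right of \eqref{bigcondition} collapses to
\[
\Big( \Phi^{\prime\prime}(R) + \kappa\, \frac{\Phi^\prime(R)}{R} \Big) |\dot S_t z|^2 .
\]
One then checks that $\Phi^{\prime\prime}(R) + \kappa\, \Phi^\prime(R)/R = 0$ in each case: the logarithmic kernel gives $R^{-2} - R^{-2}$, while the Riesz kernel gives $p(p+1)R^{-p-2} - p(p+1)R^{-p-2}$. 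Thus \eqref{bigcondition} holds with equality for all admissible $t$, and \autoref{le:claim4} yields $E_\mu^{\prime\prime}(t) \leq 0$ on the whole parameter interval (using the hypothesis $E_\mu(0) < \infty$). Note that this equality in \eqref{bigcondition} forces us onto alternative (ii), rather than (i), when we seek strictness.

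For the strict inequality at $t = 0$ I would verify the two requirements of alternative (ii). The strict kernel inequality $\Phi^{\prime\prime}(r) > r^{-1}\Phi^\prime(r)$ was already noted, so it remains to show that the set of $(x,y)$ for which $x - y$ and $\dot S_0(x-y)$ are linearly independent has positive $(\mu\times\mu)$-measure. Here I would use the hypothesis $\sigma_j \neq 1$ for some $j$: combined with the normalization \eqref{singprod} (respectively \eqref{sigmacond}), it prevents all $\sigma_k$ from coinciding, so the diagonal matrix $\dot S_0$ is not a scalar multiple of the identity and has two distinct diagonal entries, say in positions $j$ and $k$. Since $x-y$ and $\dot S_0(x-y)$ are dependent exactly when $x-y$ is an eigenvector of $\dot S_0$, and a vector with nonzero $j$th and $k$th entries cannot be such an eigenvector, it suffices to prove that $\{(x,y) : x_j \neq y_j \text{ and } x_k \neq y_k\}$ has positive measure. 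Projecting $\mu$ onto the $(x_j, x_k)$-plane reduces this to a planar statement about the pushforward measure, which inherits less-than-full mass on every coordinate line from the hyperplane hypothesis on $\mu$.

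The main obstacle is this final planar claim: a finite Borel measure on $\R^2$ that charges every horizontal and every vertical line with less than its total mass must assign positive measure to the set of pairs differing in both coordinates. I would argue the contrapositive. If that set were null, then by Fubini almost every point $u$ carries the whole mass on the cross $\{v_1 = u_1\} \cup \{v_2 = u_2\}$; intersecting the crosses of two such points shows that either almost every pair of these points agrees in one fixed coordinate, forcing concentration on a single line, or else the measure degenerates onto at most one point, again a line. Either way the measure would live on a coordinate line, contradicting the hypothesis. This combinatorial and measure-theoretic step is the only part that is not a routine differentiation, and it is where the precise form of the extra hypotheses on $\mu$ and on the $\sigma_j$ is consumed.
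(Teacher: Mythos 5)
Your proposal is correct and takes essentially the same route as the paper: apply \autoref{le:claim4} to the diagonal flow, obtain equality in \eqref{bigcondition} from the identity $s_k \ddot s_k = (p+1)\dot s_k^2$ satisfied by the diagonal entries (so that $\Phi''(R) + (p+1)\Phi'(R)/R = 0$ for both kernels), and then verify condition (ii) by a contrapositive measure-theoretic argument showing that nullity of the linear-independence set would force $\mu$ onto a coordinate hyperplane. The only difference is cosmetic: you project $\mu$ to the $(x_j,x_k)$-plane and intersect the crosses of two generic points, while the paper works directly in $\Rn$, fixing a single generic point $y$ and partitioning into the four sets $A,B,C,D$, then using nullity of $L$ once more on $C \times D$; both implementations rest on the same observation that pairs differing in both the $j$-th and $k$-th coordinates lie in $L$.
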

\begin{proof}
Concavity will be obtained from \autoref{le:claim4}. Obviously $\Phi^{\prime \prime}(r) > r^{-1} \Phi^\prime(r)$, since the logarithmic and Riesz kernels are strictly convex and decreasing. Substituting the definition of $\Phi$ into the desired inequality \eqref{bigcondition} for \autoref{le:claim4}, we find it is equivalent to
\[
(p+1) |\dot{S}_t(x)-\dot{S}_t(y)|^2 \leq (S_t(x)-S_t(y)) \cdot (\ddot{S}_t(x)-\ddot{S}_t(y)) , \qquad x \neq y .
\]
Since $S_t$ is linear, the condition simplifies to
\begin{equation} \label{Sinequality}
(p+1) |\dot{S}_t x|^2 \leq S_t x \cdot \ddot{S}_t x , \qquad x \neq 0 .
\end{equation}
Recalling the definition \eqref{Ttlogarithmic} of the diagonal matrix $S_t$ in the logarithmic case ($p=0$), we find that in fact equality holds, because both sides equal $\sum_{k=1}^n (\sigma_k^t \log \sigma_k)^2 x_k^2$. Similarly, in the Riesz case both sides equal 
\[
\frac{p+1}{p^2} \sum_{k=1}^n (1-t+t\sigma_k^{-p})^{-2-2/p} (\sigma_k^{-p}-1)^2 x_k^2 .
\]
Hence \autoref{le:claim4} yields that $E_\mu^{\prime \prime}(t) \leq 0$. 

To obtain strict inequality at $t=0$, under the assumptions that $\sigma_j \neq 1$ for some $j$ and $\mu$ has less than full measure on each hyperplane $\{ x_k = \text{const} \}$, we will verify condition (ii) in \autoref{le:claim4} with $t=0$, which means we want to show that the set 
\[
L = \{ (x,y) : \text{$x-y$ and $\dot{S}_0(x-y)$ are linearly independent} \}
\]
has positive $(\mu \times \mu)$-measure. We will prove a contrapositive statement: if $\sigma_j \neq 1$ for some $j$ and 
\[
(\mu \times \mu) (L) = 0
\]
then $\mu$ has full measure on some hyperplane $\{ x_k = \text{const} \}$. 

First consider the logarithmic case, $p=0$. Observe that $\sigma_j \neq \sigma_k$ for some $j \neq k$, since $\sigma_1 \cdots \sigma_n = 1$ and $\sigma_j \neq 1$ for some $j$. For these choices of $j,k$, we claim that $(x,y) \in L$ whenever $x_j \neq y_j$ and $x_k \neq y_k$. Indeed, considering only the $j$-th and $k$-th components of the vectors $x-y$ and $\dot{S}_0(x-y)$ gives vectors
\[
\begin{pmatrix} x_j-y_j \\ x_k-y_k \end{pmatrix}
 \qquad \text{and} \qquad 
\begin{pmatrix} (x_j-y_j) \log \sigma_j \\ (x_k-y_k) \log \sigma_k \end{pmatrix} ,
\]
which are linearly independent since $x_j-y_j \neq 0, x_k-y_k \neq 0$ and $\sigma_j \neq \sigma_k$.

The last paragraph implies that  
\[
(\mu \times \mu) \big( \{ (x,y) : x_j \neq y_j \text{\ and\ } x_k \neq y_k \} \big) \leq (\mu \times \mu) (L) = 0 .
\]
Taking cross-sections, we deduce $\mu(\{ x : x_j \neq y_j \text{\ and\ } x_k \neq y_k \}) = 0$ for $\mu$-almost every $y \in \Rn$. Fix a point $y$ for which this equation holds, and partition $\Rn$ into the disjoint sets 
\begin{align*}
A & = \{ x : x_j \neq y_j \text{\ and\ } x_k \neq y_k \} , \\
B & = \{ x : x_j = y_j \text{\ and\ } x_k = y_k \} , \\
C & = \{ x : x_j = y_j \text{\ and\ } x_k \neq y_k \} , \\
D & = \{ x : x_j \neq y_j \text{\ and\ } x_k = y_k \} .
\end{align*}
Then $\mu(A)=0$ by our choice of the point $y$, and so $\mu(B \cup C \cup D)=\mu(K)$.

If $x \in C$ and $z \in D$, then $x_j = y_j \neq z_j$ and $x_k \neq y_k = z_k$. Hence $(x,z) \in L$, by above, which means $C \times D \subset L$. Therefore $\mu(C)\mu(D)=0$, which means either $C$ or $D$ has $\mu$-measure zero. Suppose for the sake of definiteness that $\mu(C)=0$. Then $\mu(B \cup D)=\mu(K)$. That is, $\mu$ has full measure on the hyperplane $B \cup D = \{  x : x_k= y_k \}$, which completes the proof of the contrapositive. 

Next consider the Riesz case, $0<p<n$. The argument to complete the strict inequality statement proceeds as for the logarithmic case above: $\sigma_j \neq \sigma_k$ for some $j \neq k$, since $( \sigma_1^{-p} + \dots + \sigma_n^{-p} )/n = 1$ and $\sigma_j \neq 1$ for some $j$. From this it follows that $(x,y) \in L$ whenever $x_j \neq y_j$ and $x_k \neq y_k$. Considering only the $j$-th and $k$-th components of the vectors $x-y$ and $\dot{S}_0(x-y)$ gives 
\[
\begin{pmatrix} x_j-y_j \\ x_k-y_k  \end{pmatrix}
 \quad \text{and} \quad 
\frac{1}{p}\begin{pmatrix} (x_j-y_j) (1 - \sigma_j^{-p}) \\ (x_k-y_k) (1 - \sigma_k^{-p}) \end{pmatrix} ,
\]
which are easily checked to be linearly independent since $\sigma_j \neq \sigma_k$.
\end{proof}
\noindent \textbf{Motivation for $S_t$.} The motivation for choosing $S_t$ earlier in the section is revealed by the proof above. Each diagonal entry $d(t)$ of $S_t$ is chosen to satisfy $(p+1)(d^\prime)^2 = d d^{\prime\prime}$ in order to satisfy \eqref{Sinequality} with equality. The equation has exponential solutions when $p=0$, explaining our choice of the diagonal entry $\sigma^t$ in the matrix $S_t$ for the logarithmic case, and in the Riesz case $0<p<n$ the equation has solutions of the form $(a+bt)^{-1/p}$.

\medskip
\noindent \textbf{Connection to Schiffer's work.} Concavity of the energy $E_\mu(t)$ was observed by Schiffer \cite[p.{\,}321]{S54} in the Newtonian case $p=1$ for the  family of reciprocal linear stretches
\[
F_t =
\begin{pmatrix}
1/t & 0 & 0 \\ 0 & 1 & 0 \\ 0 & 0 & 1
\end{pmatrix} .
\]
His result fits the form given above with $p=1$ and $d(t)=1/t$. By the way, Schiffer actually showed concavity of $t E_\mu(1/t)$, which is equivalent to concavity of $E_\mu(t)$.

\section{\bf Proof of \autoref{capacitymin} and \autoref{capacityminp}}
\label{mainproof}

The idea is to reduce to a diagonal matrix by the singular value decompositon, and then apply linear deformations and energy concavity results from the preceding section. 

The singular value decomposition says $M=A S B$, where $A$ and $B$ are orthogonal matrices and $S$ is diagonal with the singular values $\sigma_1,\dots,\sigma_n$ on the diagonal. We may assume $A$ is the identity, because capacity is invariant under orthogonal transformations. We may further reduce to the case $B=I$, by considering the rotated set $\widetilde{K}=BK$, which is again compact with an irreducible isometry group. Thus we may suppose $M$ equals the diagonal matrix $S$.

If $S$ is the identity matrix then there is nothing to prove. So assume $S$ is not the identity. The task is to prove that $\logcap(SK) > \logcap(K)$ in the logarithmic case and $C_p(SK) > C_p(K)$ in the Riesz case.

For logarithmic capacity in \autoref{capacitymin}, the matrix $M$ is assumed to be volume-preserving, which means the product of its singular values equals $1$. Thus assumption \eqref{singprod} holds. For Riesz capacity, the hypothesis in \autoref{capacityminp} on the $p$-norm of the singular values of $M^{-1}$ says exactly that the assumption \eqref{sigmacond} holds. Hence in each case, the results of \autoref{sec:logvariation} can be applied with the linear diffeomorphism $S_t$. Let 
\[
K_t = S_t(K) , \qquad 0 \leq t \leq 1 ,
\]
be the image of $K$ under the stretch map $S_t$, and write $V$ for either $\logV$ or $V_p$, because the proof that follows is the same in both cases. 

Since $S_0$ is the identity and $S_1=S$, we have $K_0=K$ and $K_1=SK$. Therefore it will suffice to show 
\begin{equation} \label{Vinequalities}
V(K_0) > V(K_t) \qquad \text{when $0 < t \leq 1$}
\end{equation}
and $K$ is a compact set having finite energy $V(K)<\infty$ and irreducible isometry group.  Taking $t=1$ yields the theorems. 

The energy of $K_t$ is
\begin{align*}
V(K_t) 
& = \min_\nu \int_{K_t} \int_{K_t} \Phi(|x-y|) \, d\nu(x) d\nu(y) \\
& \leq \int_K \int_K \Phi(|S_t x - S_t y|) \, d\mu(x) d\mu(y) = E_\mu(t) ,
\end{align*}
by choosing $\nu=\mu \, \circ \, S_t^{-1}$ to be the pushforward of $\mu$ under $S_t$, where $\mu$ is the equilibrium measure on $K$. Equality holds at $t=0$, and so \eqref{Vinequalities} will follow once we show
\[
E_\mu(0) > E_\mu(t) \qquad \text{whenever $0 < t \leq 1$.}
\]
That is, we want $E_\mu(t)$ to be strictly maximal at $t=0$. 

To prove this assertion we plan to invoke \autoref{logcapfirstderivsymmetric} and \autoref{concavitypenergy}, and so we must verify their hypotheses. We know $E_\mu(0)=V(K)<\infty$ because $K$ has positive capacity by assumption. The isometry group of $K$ is compact because $K$ is compact. Also, $\sigma_j \neq 1$ for some $j$ because $S$ is not the identity matrix. 

Suppose (to the contrary of what we wish to establish) that the equilibrium measure $\mu$ of $K$ has full measure on some hyperplane $H = \{ x \in \Rn : x^\dagger y = s \}$, where $y$ is the normal vector and $s \in \R$. That is, suppose $\mu(\Rn \setminus H) = 0$. By the irreducibility hypothesis, isometries $U_1,\dots,U_n$ of $K$ exist for which $\{U_1 y,\dots,U_n y\}$ forms a basis in $\Rn$. Thus the square matrix $W = [U_1y \cdots U_n y]$ is invertible. Write $E = U_1(H) \cap \dots \cap U_n(H)$ for the intersection of the images of the hyperplane under the chosen isometries. This set consists of a single point, since $x \in E$ means $x^\dagger U_k y = s$ for all $k$, which is equivalent to $x^\dagger W = [s \dots s]$, and this last equation determines $x$ uniquely by invertibility of $W$; hence $E = \{ x \}$ consists of exactly that one point. Further,  
\begin{align*}
\mu(\Rn \setminus E)
& = \mu \big( \cup_{k=1}^n \Rn \setminus U_k(H) \big) \\
& \leq \sum_{k=1}^n \mu \big( U_k(\Rn \setminus H) \big) \\
& = 0 
\end{align*}
because $\mu(\Rn \setminus H) = 0$ and uniqueness of the equilibrium measure implies invariance of $\mu$ under each isometry $U_k$. The inequality shows $\mu$ has full measure on the single-point set $E=\{ x \}$, which means $\mu$ is a multiple of a delta measure at that point and hence $V(K)=\infty$, contradicting our hypothesis. We conclude that $\mu$ has less than full measure in every $(n-1)$-dimensional hyperplane. 

\autoref{concavitypenergy} now says that the energy $E_\mu(t)$ is a concave function of $t$, with its second derivative negative at $t=0$, and the first derivative equal to $0$ at $t=0$ by \autoref{logcapfirstderivsymmetric}. Hence $E_\mu$ is strictly maximal at $t=0$, as we needed to show.

\section{\bf Proof of \autoref{capmoment}}
\label{capmomentproof}

Let $M$ be a real, invertible $n \times n$ matrix. The proof involves estimating the $p$-norm of $M^{-1}$ in terms of the $2$-norm of $M$, and then relating that $2$-norm to the moment of inertia and volume. 

Let $q = 2/(n-1)$, so that $0 < p \leq q$ by assumption in the corollary. When $n=2$ it is further assumed that $0 < p < 2 = q$, but that stricter condition will be used only to ensure that the $p$-capacity is well defined. 

Since $p \leq q$, Jensen's inequality yields that 
\begin{equation} \label{Jensen}
\lVert M^{-1} \rVert_{p,n} \leq \lVert M^{-1} \rVert_{q,n}.
\end{equation}

Next we show 
\begin{equation} \label{Schattenest}
\lVert M^{-1} \rVert_{q,n} \leq \lVert M \rVert_{2,n}^{n-1} / |\det M| .
\end{equation}
To prove this claim, write $\sigma_1,\dots,\sigma_n > 0$ for the singular values of $M$, and observe that
\begin{align*}
|\det M| \, \lVert M^{-1} \rVert_{q,n}
& = \left( \prod_{j=1}^n \sigma_j \right) \left( \frac{1}{n} \sum_{k=1}^n \sigma_k^{-q} \right)^{\!\! 1/q} \\
& = \left( \frac{1}{n} \sum_{k=1}^n \prod_{j \neq k} \sigma_j^q \right)^{\!\! 1/q} \\
& = \left( \frac{1}{n} \sum_{k=1}^n \prod_{j=1}^{n-1} \sigma_{j+k}^q \right)^{\!\! 1/q} 
\end{align*}
where for notational convenience we have defined $\sigma_{m+n}=\sigma_m$ for $m=1,\dots,n$. Write $f_j(k)=\sigma_{j+k}^q$ in the preceding sum and apply the discrete H\"{o}lder inequality with exponent $2/q$ (using that $(q/2)+\dots +(q/2)=(n-1)(q/2)=1$) to deduce that 
\begin{align*}
|\det M| \, \lVert M^{-1} \rVert_{q,n}
& \leq \prod_{j=1}^{n-1} \left( \frac{1}{n} \sum_{k=1}^n f_j(k)^{2/q} \right)^{\!\! 1/2} \\
& = \prod_{j=1}^{n-1} \left( \frac{1}{n} \sum_{k=1}^n \sigma_k^2 \right)^{\!\! 1/2} =  \lVert M \rVert_{2,n}^{n-1} ,
\end{align*}
which proves \eqref{Schattenest}. 

To relate these observations to the volume and moment of inertia, notice the scale invariant form of \autoref{capacityminp} says that  
\[
C_p(K) \leq C_p(MK) \, \lVert M^{-1} \rVert_{p,n} ,
\]
with equality if and only if $M$ is a nonzero scalar multiple of an orthogonal matrix, that is, if and only if $M$ is a euclidean similarity. Combining this inequality with \eqref{Jensen} and \eqref{Schattenest}, we find
\begin{equation} \label{capacityandM}
C_p(K) \leq C_p(MK) \, \frac{\lVert M \rVert_{2,n}^{n-1}}{|\det M|} ,
\end{equation}
with equality if and only if $M$ is a euclidean similarity. Obviously 
\[
\frac{1}{|\det M|} = \frac{V(K)}{V(MK)} ,
\]
while \autoref{momentformulas} provides the formula 
\[
\lVert M \rVert_{2,n}^2 = \frac{1}{|\det M|} \frac{I(MK)}{I(K)} .
\]
Substituting these expressions into inequality \eqref{capacityandM}, we obtain that
\[
C_p(K) \, \sqrt{\frac{I(K)^{n-1}}{V(K)^{n+1}}} \leq C_p(MK) \, \sqrt{\frac{I(MK)^{n-1}}{V(MK)^{n+1}}} 
\]
with equality if and only if $M$ is a euclidean similarity. This completes the proof of the corollary.

\section{\bf Proof of \autoref{capacityminn-2}}
\label{corproofsec}

The scale invariant form of \autoref{capacityminp} says in the special case $p=2$ that  
\begin{equation} \label{C2estimate}
C_2(MK) \, \lVert M^{-1} \rVert_{2,n} \geq C_2(K) .
\end{equation}
Meanwhile, \autoref{momentformulas} applied to $M^{-1}$ gives that  
\begin{align*}
\lVert M^{-1} \rVert_{2,n}^2 
& = \frac{1}{|\det M^{-1}|} \, \frac{I(M^{-1}K)}{I(K)} \\
& = \frac{V(K)^{1+4/n}}{I(K)} \frac{I(M^{-1}K)}{V(MK)^{2/n} \, V \big( M^{-1}K \big)^{1+2/n}} .
\end{align*}
Taking the square root and remembering the definition \eqref{alphadef} of the asymmetry $\alpha(\cdot)$, we deduce
\begin{equation} \label{2normidentity}
\lVert M^{-1} \rVert_{2,n} = \frac{\alpha(M^{-1} K)/V(MK)^{1/n}}{\alpha(K)/V(K)^{1/n}} .
\end{equation}
The corollary now follows by substituting into \eqref{C2estimate}. 

\section*{Acknowledgments}
This research was supported by grants from the Simons Foundation (\#429422 to Richard Laugesen) and the University of Illinois Research Board (RB19045). I am grateful to Igor Pritsker for help with the literature on logarithmic capacity in higher dimensions, and to Juan Manfredi for pointing out references on variational $p$-capacities.

\appendix

\section{\bf Averaging over isometries}
Matrix averages provide a key tool in proving \autoref{logcapfirstderivsymmetric}. We begin with an elementary result in $2$ dimensions. 
\begin{lemma}[$2$-dimensional rotational averaging] \label{averaging2D}
Let $\theta_j = 2\pi j/N$. If $N \geq 3$ then 
\[
\sum_{j=1}^N 
\begin{pmatrix}
\ \ \cos \theta_j & \sin \theta_j \\ -\sin \theta_j & \cos \theta_j
\end{pmatrix}
\begin{pmatrix}
-1 & 0 \\ \ 0 & 1
\end{pmatrix}
\begin{pmatrix}
\cos \theta_j & -\sin \theta_j \\ \sin \theta_j & \ \ \cos \theta_j 
\end{pmatrix} 
=
\begin{pmatrix}
0 & 0 \\ 0 & 0
\end{pmatrix} .
\]
\end{lemma}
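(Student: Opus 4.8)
The plan is to recognize the product being summed as a rotation of the fixed diagonal matrix $D = \begin{pmatrix} -1 & 0 \\ 0 & 1 \end{pmatrix}$, and then reduce the matrix identity to the vanishing of a geometric sum of roots of unity. The right factor in each summand is the counterclockwise rotation $R_j = \begin{pmatrix} \cos\theta_j & -\sin\theta_j \\ \sin\theta_j & \cos\theta_j \end{pmatrix}$ by angle $\theta_j$, while the left factor is its transpose $R_j^\dagger$. Hence each term is the conjugate $R_j^\dagger D R_j$, and since conjugation by a rotation preserves both the trace and the symmetry of $D$, every term is again symmetric and trace-free.

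First I would carry out the $2 \times 2$ multiplication directly, writing $c=\cos\theta_j$, $s=\sin\theta_j$ and applying the double-angle identities $c^2-s^2=\cos 2\theta_j$ and $2cs = \sin 2\theta_j$, to obtain the explicit form
\[
R_j^\dagger D R_j = \begin{pmatrix} -\cos 2\theta_j & \sin 2\theta_j \\ \sin 2\theta_j & \cos 2\theta_j \end{pmatrix} .
\]
Summing over $j$, the desired matrix identity then collapses entrywise to the two scalar claims $\sum_{j=1}^N \cos 2\theta_j = 0$ and $\sum_{j=1}^N \sin 2\theta_j = 0$, which are precisely the real and imaginary parts of the complex sum $\sum_{j=1}^N e^{2 i \theta_j}$.

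Next I would evaluate this exponential sum. Writing $\omega = e^{4\pi i/N}$, it becomes the geometric series $\sum_{j=1}^N \omega^j = \omega\,(\omega^N-1)/(\omega-1)$. Since $\omega^N = e^{4\pi i} = 1$, the numerator vanishes and the whole sum equals $0$, provided $\omega \neq 1$. The hypothesis $N \geq 3$ is exactly what guarantees $\omega \neq 1$: for $N=1$ or $N=2$ the angle $4\pi/N$ is an integer multiple of $2\pi$, so $\omega = 1$, the series degenerates to $N$ equal terms, and the sum does not vanish. This is the only place the restriction $N \geq 3$ enters, and it is the single subtle point of the argument — there is no genuine analytic obstacle, only the bookkeeping needed to identify when the root of unity $\omega$ is nontrivial.
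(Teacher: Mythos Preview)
Your proof is correct and follows essentially the same route as the paper: multiply out to obtain the matrix $\begin{pmatrix} -\cos 2\theta_j & \sin 2\theta_j \\ \sin 2\theta_j & \cos 2\theta_j \end{pmatrix}$, then reduce to the vanishing of the geometric sum $\sum_{j=1}^N e^{4\pi i j/N}$, using $N\geq 3$ exactly to ensure the ratio $e^{4\pi i/N}\neq 1$. Your added remarks about conjugation preserving trace and symmetry are a nice bit of orientation but not logically needed.
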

\begin{proof}
Multiplying the matrices on the left side of the formula yields
\[
\sum_{j=1}^N 
\begin{pmatrix}
-\cos 2\theta_j & \sin 2\theta_j \\ \ \ \sin 2\theta_j & \cos 2\theta_j 
\end{pmatrix} ,
\]
and so it suffices to observe that 
\[
\sum_{j=1}^N (\cos 2\theta_j + i \sin 2\theta_j) = \sum_{j=1}^N e^{(4\pi i/N)j} = e^{4\pi i/N} \frac{e^{(4\pi i/N)N} - 1}{e^{4\pi i/N} - 1} = 0 .
\]
\end{proof}
The denominator of the last fraction vanishes when $N=2$, and so the argument breaks down. Indeed, when $N=2$, \autoref{averaging2D} is easily seen to be false. 

\smallskip
Write $\dagger$ for the matrix transpose operation.  
\begin{lemma}[$n$-dimensional rotational averaging] \label{averagingnD}
If $\mathcal{U}$ is an irreducible, compact subgroup of the orthogonal group $\mathcal{O}(n), n \geq 2$, and $\rho$ is Haar  measure on $\mathcal{U}$, then
\[
\int_{\mathcal{U}} U^\dagger S U \, d\rho(U) = 0 
\]
whenever $S$ is a real symmetric $n \times n$ matrix with trace $0$. 
\end{lemma}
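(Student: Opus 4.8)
The plan is to recognize the averaged matrix $A = \int_{\mathcal{U}} U^\dagger S U \, d\rho(U)$ as a self-intertwiner of the representation of $\mathcal{U}$ on $\Rn$, and then to invoke irreducibility to force $A$ to be a scalar multiple of the identity, whose scalar the trace-zero hypothesis pins down as $0$.

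First I would record three structural facts about $A$. Since each conjugate $U^\dagger S U$ is real symmetric and an integral of symmetric matrices is symmetric, $A$ is real symmetric. Taking the trace through the integral and using the cyclic invariance of trace together with $U^\dagger U = I$ gives $\tr(A) = \int_{\mathcal{U}} \tr(S) \, d\rho(U) = \tr(S) = 0$, since $\rho$ is a probability measure. Most importantly, $A$ commutes with the group: for any $W \in \mathcal{U}$, the substitution $V = UW$ and the invariance of Haar measure give
\[
W^\dagger A W = \int_{\mathcal{U}} (UW)^\dagger S (UW) \, d\rho(U) = \int_{\mathcal{U}} V^\dagger S V \, d\rho(V) = A .
\]
Because $W$ is orthogonal this reads $W^{-1} A W = A$, that is, $A W = W A$ for every $W \in \mathcal{U}$.

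Next I would pass from commutation to scalarity, using that $A$ is real symmetric. Each eigenspace $\ker(A-\lambda I)$ is invariant under $\mathcal{U}$, since $Ax = \lambda x$ implies $A(Wx) = WAx = \lambda Wx$. A real symmetric matrix has a real eigenvalue and a corresponding nonzero eigenvector, so some eigenspace is nonzero; by irreducibility that eigenspace must equal all of $\Rn$. Hence $A = \lambda I$ for a single scalar $\lambda$, and then $0 = \tr(A) = n\lambda$ forces $\lambda = 0$, so $A = 0$.

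The only point requiring care, and thus the main obstacle, is the commutation identity: the legitimacy of the change of variables inside the matrix-valued integral and the use of Haar invariance. Once that is secured, the remainder is the standard argument for a self-adjoint intertwiner, made elementary here by working over $\R$ with the spectral decomposition of a symmetric matrix rather than invoking the complex form of Schur's lemma.
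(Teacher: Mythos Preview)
Your proof is correct and follows essentially the same route as the paper's: establish $AW=WA$ for all $W\in\mathcal{U}$ via Haar invariance, use irreducibility on an eigenspace to force $A=\lambda I$, and read off $\lambda=0$ from the trace. One minor quibble: the statement does not normalize $\rho$ to be a probability measure (for finite $\mathcal{U}$ the paper uses counting measure), but your appeal to that normalization is harmless since $\tr(S)=0$ already gives $\tr(A)=\tr(S)\,\rho(\mathcal{U})=0$ regardless of the total mass.
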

If $\mathcal{U}$ is finite then the Haar measure is simply counting measure and the lemma says $\sum_{U \in \mathcal{U}} U^\dagger S U = 0$, which generalizes the $2$-dimensional result in \autoref{averaging2D}. 

\autoref{averagingnD} is a consequence of Schur's Lemma. We include a direct proof below.
\begin{proof}
Let $M = \int_{\mathcal{U}} U^\dagger S U \, d\rho(U)$. The matrix $M$ is symmetric, since $S$ is symmetric. Furthermore, $M$ commutes with each group element $U \in \mathcal{U}$, with $MU=UM$ by the group property of $\mathcal{U}$ and invariance of Haar measure. Let $\lambda$ be an eigenvalue of $M$ with eigenvector $w$. Then
\[
  M(Uw)=U(Mw)=U(\lambda w)=\lambda(Uw).
\]
Hence the entire orbit $\{ Uw : U \in \mathcal{U} \}$ consists of eigenvectors belonging to $\lambda$. The orbit spans all of $\Rn$, by the irreducibility hypothesis, and so $M$ equals $\lambda$ times the identity matrix. Taking the trace yields
\[
\lambda n = \tr M =\int_{\mathcal{U}} \tr(U^\dagger SU) \, d\rho(U) =\int_{\mathcal{U}} \tr(S) \, d\rho(U) = \tr(S).
\]
Since $\tr S = 0$ by hypothesis, we conclude $\lambda=0$ and hence $M=0$, which proves the lemma. \end{proof}

Irreducibility leads also to a simple (and known) formula for the moment of inertia of a linear image, which we employ when proving \autoref{capmoment} and \autoref{capacityminn-2}. 
\begin{lemma} \label{momentformulas}
If $K \subset \Rn, n \geq 2$, is a compact set with irreducible isometry group and $M$ is an invertible $n \times n$ matrix, then
\[
I(MK) = |\det M| \, \lVert M \rVert_{2,n}^2 \, I(K) .
\]
\end{lemma}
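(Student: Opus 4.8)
The plan is to reduce everything to the (centered) second-moment matrix of $K$ and to extract its structure from irreducibility by way of the averaging result \autoref{averagingnD}. First I would record how the pieces transform under $M$. The change of variables $z = Mx$ (with $V(MK) = |\det M|\,V(K)$) shows the centroid transforms linearly, $\overline{Mx} = M\overline{x}$, and then
\[
I(MK) = \int_{MK} |z - M\overline{x}|^2 \, dz = |\det M| \int_K |M(x - \overline{x})|^2 \, dx .
\]
Introducing the symmetric matrix $Q = \int_K (x - \overline{x})(x - \overline{x})^\dagger \, dx$, which has $\tr Q = \int_K |x - \overline{x}|^2 \, dx = I(K)$, and expanding $|M(x-\overline{x})|^2 = (x - \overline{x})^\dagger M^\dagger M (x - \overline{x})$ and integrating entrywise, I get the trace identity $\int_K |M(x-\overline{x})|^2 \, dx = \tr(M^\dagger M \, Q)$.

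The heart of the matter is to prove $Q$ is a scalar multiple of the identity. Each isometry $U$ of $K$ preserves Lebesgue measure and maps $K$ onto itself, and $UK = K$ forces $U\overline{x} = \overline{x}$ since the centroid commutes with linear maps. Substituting $x \mapsto Ux$ in the integral for $Q$ and using $Ux - \overline{x} = U(x - \overline{x})$ therefore gives $Q = U Q U^\dagger$ for every $U$ in the isometry group $\mathcal{U}$. Consequently the trace-free symmetric matrix $S = Q - (I(K)/n)\,\mathrm{Id}$ also satisfies $U^\dagger S U = S$ for every $U$, so that
\[
\int_{\mathcal{U}} U^\dagger S U \, d\rho(U) = \rho(\mathcal{U})\, S .
\]
This integral vanishes by \autoref{averagingnD} (this is exactly where the irreducibility hypothesis enters), and since $\rho(\mathcal{U}) > 0$ we conclude $S = 0$, that is, $Q = (I(K)/n)\,\mathrm{Id}$.

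Finally I would assemble the pieces: with $Q = (I(K)/n)\,\mathrm{Id}$,
\[
\tr(M^\dagger M \, Q) = \frac{I(K)}{n} \, \tr(M^\dagger M) = \frac{I(K)}{n} \sum_{k=1}^n \sigma_k(M)^2 = I(K) \, \lVert M \rVert_{2,n}^2 ,
\]
and multiplying by $|\det M|$ gives the stated formula. The only delicate step is the reduction of $Q$ to a scalar matrix; everything else is a change of variables and a trace computation. That step is precisely where irreducibility is indispensable, and routing it through \autoref{averagingnD} (rather than reproving the orbit-spanning Schur argument from scratch) keeps the proof short.
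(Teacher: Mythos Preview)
Your proof is correct and follows essentially the same approach as the paper: show the second-moment matrix $Q$ is a scalar multiple of the identity via irreducibility, then finish with a change of variables and a trace computation. The only cosmetic differences are that the paper first observes irreducibility forces the centroid to the origin (so it works with the uncentered $Q=\int_K xx^\dagger\,dx$), and that the paper reruns the Schur argument from the proof of \autoref{averagingnD} directly on $Q$, whereas you apply \autoref{averagingnD} itself to the trace-free part $S=Q-(I(K)/n)\,\mathrm{Id}$; both routes are equivalent.
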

\begin{proof}
Write $\mathcal{U}$ for the irreducible isometry group of $K$. Note $\mathcal{U}$ is compact due to compactness of $K$. The centroid of $K$ lies at the origin as a consequence of the irreducibility, and hence the centroid of $MK$ also lies at the origin, by a linear change of variable.

Define the moment matrix of $K$ to be $Q = \int_K x x^\dagger \, dx$, where $x$ is a column vector. We claim $Q$ is a scalar multiple of the identity. For let $U \in \mathcal{U}$ be an isometry of $K$.  The invariance of $K$ under $U$ implies that $Q=U^\dagger QU$, so that $Q = \frac{1}{n} \tr(Q) \, \text{Id.}$ by arguing as in the proof of \autoref{averagingnD}, that is, by using Schur's Lemma again. The definition of $Q$ reveals that its trace equals the moment of inertia of $K$, using  that the centroid of $K$ lies at the origin, and so 
\begin{equation} \label{momentidentity}
Q = \frac{1}{n} I(K) \, \text{Id.}
\end{equation}

The moment of inertia of $MK$ can now be computed as
\begin{align*}
I(MK)
& = \tr \int_{MK} x x^\dagger \, dx && \text{since the centroid of $MK$ lies at the origin} \\
& = \tr \big( MQ M^\dagger |\det M| \big) && \text{by a change of variable $x \mapsto Mx$} \\
& =\frac{1}{n} I(K) \, \big( \tr M M^\dagger \big) |\det M| && \text{by \eqref{momentidentity}} \\
& = I(K) \, \lVert M \rVert_{2,n}^2 \, |\det M| .
\end{align*}
\end{proof}


\begin{thebibliography}{99}

\bibitem{AH96}
D. R. Adams and L. I. Hedberg. 
Function Spaces and Potential Theory. 
Grundlehren der Mathematischen Wissenschaften [Fundamental Principles of Mathematical Sciences], 314. Springer--Verlag, Berlin, 1996.

\bibitem{AV88}
G. D. Anderson and M. K. Vamanamurthy. 
\emph{The transfinite moduli of condensers in space.} Tohoku Math. J. (2) 40 (1988), no. 1, 1--25. 

\bibitem{BS13}
A. Baernstein II, A. Yu. Solynin. 
\emph{Monotonicity and comparison results for conformal invariants.} Rev. Mat. Iberoam. 29 (2013), no. 1, 91--113. 

\bibitem{B04a}
D. Betsakos. 
\emph{Symmetrization, symmetric stable processes, and Riesz capacities.} Trans. Amer. Math. Soc. 356 (2004), no. 2, 735--755.

\bibitem{B04b}
D. Betsakos. 
\emph{Addendum to: ``Symmetrization, symmetric stable processes, and Riesz capacities''.} Trans. Amer. Math. Soc. 356 (2004), no. 9, 3821.

\bibitem{BKKP20}
D. Betsakos, G. Kelgiannis, M. Kourou and S. Pouliasis. 
\emph{Semigroups of holomorphic functions and condenser capacity.} 
Anal. Math. Phys. 10 (2020), no. 1, Paper No. 8, 18 pp. 

\bibitem{B83}
C. Borell. 
\emph{Capacitary inequalities of the Brunn-Minkowski type.} 
Math. Ann. 263 (1983), no. 2, 179--184. 

\bibitem{BHS19}
S. V. Borodachov, D. P. Hardin and E. B. Saff. 
Discrete Energy on Rectifiable Sets. Springer Monographs in Mathematics. Springer, New York, 2019.

\bibitem{BF16}
D. Bucur and I. Fragal\`{a}. 
\emph{A Faber--Krahn inequality for the Cheeger constant of $N$-gons.} J. Geom. Anal. 26 (2016), no. 1, 88--117.

\bibitem{BF21}
D. Bucur and I. Fragal\`{a}. 
\emph{Symmetry results for variational energies on convex polygons.} ESAIM Control Optim. Calc. Var. 27 (2021), Paper No. 3, 16 pp.

\bibitem{CKL98}
U. Cegrell, S. Kolodziej and N. Levenberg. 
\emph{Two problems on potential theory for unbounded sets.} 
Math. Scand. 83 (1998), no. 2, 265--276. 

\bibitem{DK08}
V. N. Dubinin and D. Karp. 
\emph{Capacities of certain plane condensers and sets under simple geometric transformations.} Complex Var. Elliptic Equ. 53 (2008), no. 6, 607--622. 

\bibitem{FLL07} 
P. Freitas, R. S. Laugesen, and G. F. Liddell. 
\emph{On convex surfaces with  minimal moment of inertia.} J. Math. Phys. 48 (2007), no.~12,  122902, 21. 

\bibitem{HHW91}
R. R. Hall, W. K. Hayman and A. W. Weitsman. 
\emph{On asymmetry and capacity.} J. Analyse Math. 56 (1991), 87--123.

\bibitem{H89}
W. K. Hayman. Subharmonic Functions. Vol. 2. London Mathematical Society Monographs, 20. Academic Press, Inc. [Harcourt Brace Jovanovich, Publishers], London, 1989.

\bibitem{L72}
N. S. Landkof. 
Foundations of Modern Potential Theory. Translated from the Russian by A. P. Doohovskoy. Die Grundlehren der mathematischen Wissenschaften, Band 180. Springer--Verlag, New York--Heidelberg, 1972.

\bibitem{L93thesis}
R. Laugesen. 
\emph{Extremal problems involving logarithmic and Green capacity.} Thesis (Ph.D.) Washington University in St.\ Louis. 1993. 111 pp, ProQuest LLC.

\bibitem{L93}
R. Laugesen. 
\emph{Extremal problems involving logarithmic and Green capacity.} Duke Math. J. 70 (1993), no.~2, 445--480.

\bibitem{LS11a}
R. S. Laugesen and B. A. Siudeja. 
\emph{Sums of Laplace eigenvalues---rotationally symmetric maximizers in the plane.} J. Funct. Anal. 260 (2011), no. 6, 1795--1823.

\bibitem{LS11}
R. S. Laugesen and B. A. Siudeja. 
\emph{Sums of Laplace eigenvalues: rotations and tight frames in higher dimensions.} J. Math. Phys. 52 (2011), no.~9, 093703, 13 pp.

\bibitem{M90}
P. Mattila. 
\emph{Orthogonal projections, Riesz capacities, and Minkowski content.} Indiana Univ. Math. J. 39 (1990), no. 1, 185--198.

\bibitem{MH06}
P. J. M\'{e}ndez-Hern\'{a}ndez. 
\emph{An isoperimetric inequality for Riesz capacities.} 
Rocky Mountain J. Math. 36 (2006), no. 2, 675--682. 

\bibitem{PS53} 
G. P\'{o}lya and M. Schiffer. 
\emph{Convexity of functionals by transplantation.} J. Analyse Math. 3 (1953-1954), 245--345. Reprinted in ``George P\'{o}lya: Collected Papers,'' Vol.\ 3, pp.\ 290--390. MIT Press, Cambridge, Massachusetts, 1984.

\bibitem{PS51}
G. P\'{o}lya and G. Szeg\H{o}. 
Isoperimetric Inequalities in Mathematical Physics. Annals of Mathematics Studies, no.~27, Princeton University Press, Princeton, N.J., 1951.

\bibitem{P11}
S. Pouliasis. 
\emph{Condenser energy under holomorphic motions.}
Illinois J. Math. 55 (2011), no.~3, 1119--1134 (2013).

\bibitem{P21b} 
S. Pouliasis. 
\emph{Concavity of condenser energy under boundary variations.} J. Geom. Anal., \doi{10.1007/s12220-020-00547-3}

\bibitem{ST97}
E. B. Saff and V. Totik.
Logarithmic Potentials with External Fields. 
Appendix B by Thomas Bloom. Grundlehren der Mathematischen Wissenschaften [Fundamental Principles of Mathematical Sciences], 316. Springer--Verlag, Berlin, 1997.

\bibitem{S54}
M. Schiffer. 
\emph{Variation of domain functionals.} 
Bull. Amer. Math. Soc. 60 (1954), 303--328.

\bibitem{S92}
A. Yu. Solynin. 
\emph{Boundary distortion and change of modulus under extension of a doubly connected domain.} (Russian) Zap. Nauchn. Sem. S.-Peterburg. Otdel. Mat. Inst. Steklov. (POMI) 201 (1992), Issled. po Line\u{\i}n. Oper. Teor. Funktsi\u{\i}. 20, 157--163, 192; translation in J. Math. Sci. 78 (1996), no. 2, 218--222.

\bibitem{SZ04}
A. Yu. Solynin and V. Zalgaller. 
\emph{An isoperimetric inequality for logarithmic capacity of polygons.} Ann. of Math. (2) 159 (2004), no.~1, 277--303.

\bibitem{W83}
T. Watanabe. 
\emph{The isoperimetric inequality for isotropic unimodal L\'{e}vy processes.} 
Z. Wahrsch. Verw. Gebiete 63 (1983), no. 4, 487--499. 

\bibitem{X18}
J. Xiao. 
\emph{Exploiting log-capacity in convex geometry.} Asian J. Math. 22 (2018), no. 5, 955--979. 

\bibitem{X20}
J. Xiao. 
\emph{Geometrical logarithmic capacitance.} Adv. Math. 365 (2020), 107048, 53 pp.

\end{thebibliography}
\end{document}